\pgfplotsset{compat=1.14}
\pgfmathtruncatemacro\distance{1}
\tikzset{triangle/.style = {fill=blue!20, regular polygon, regular polygon sides=3, inner sep=0pt}, border rotated/.style = {shape border rotate=180}}
\tikzset{triangle/.style = {fill=blue!20, regular polygon, regular polygon sides=3, inner sep=0pt}}
\tikzset{hexagon/.style = {fill=blue!20, regular polygon, regular polygon sides=6, inner sep=0pt,outer sep=0pt}}
\tikzset{literal/.style={circle,fill=gray!20,draw,minimum size=1.4cm,inner sep=0pt}, }
\tikzset{hgnode/.style={circle,fill=gray!20,draw,minimum size=1.4cm,inner sep=0pt}, }
\tikzset{xgen/.style={triangle, fill=yellow!20,draw,minimum size=2.2cm,inner sep=0pt},}
\tikzset{ygen/.style={diamond, border rotated,fill=red!20,draw,minimum size=2.25cm,inner sep=0pt},}
\tikzset{ytype/.style={regular polygon,regular polygon sides=6,fill=orange!20,draw,minimum size=1.5cm,inner sep=0pt},}
\tikzset{clause/.style={rectangle, border rotated,fill=blue!20,draw,minimum width=2.2cm,minimum height=1cm, inner sep=0pt},}
\tikzset{clausetype/.style={regular polygon,regular polygon sides=5,fill=green!20,draw,minimum size=2cm,inner sep=0pt},}
\tikzset{dot/.style 2 args={fill, circle, inner sep=1pt, label={#1:\scriptsize #2}}}
\DeclareFontFamily{OT1}{pzc}{}
\DeclareFontShape{OT1}{pzc}{m}{it}{<-> s * [1.10] pzcmi7t}{}
\DeclareMathAlphabet{\mathpzc}{OT1}{pzc}{m}{it}
\newcommand{\be}{\begin{equation}}
\newcommand{\ee}{\end{equation}}
\newcommand{\ba}{\begin{align}}
\newcommand{\ea}{\end{align}}
\newcommand{\ctg}{G} 
\newcommand{\hg}{\Gamma}
\begin{document}

\title{Effective gaps are not effective: quasipolynomial classical simulation of obstructed stoquastic Hamiltonians} 
\author{Jacob Bringewatt}
\affiliation{Department of Physics, University of Maryland, College Park, Maryland 20742, USA}
\affiliation{Joint Center for Quantum Information and Computer Science, College Park, Maryland 20742, USA}
\affiliation{Joint Quantum Institute, College Park, Maryland 20742, USA}
\author{Michael Jarret}
\affiliation{Booz Allen Hamilton, Annapolis Junction, MD, 20701, USA}
\affiliation{Department of Mathematical Sciences, George Mason University, Fairfax, VA, 22030, USA}
\affiliation{Quantum Science and Engineering Center, George Mason University, Fairfax, VA, 22030, USA}
\date{\today}

\begin{abstract}
All known examples suggesting an exponential separation between classical simulation algorithms and stoquastic adiabatic quantum computing (StoqAQC) exploit symmetries that constrain adiabatic dynamics to effective, symmetric subspaces. The symmetries produce large effective eigenvalue gaps, which in turn make adiabatic computation efficient. We present a classical algorithm to sub-exponentially sample from an effective subspace of any $k$-local stoquastic Hamiltonian $H$, without \emph{a priori} knowledge of its symmetries (or near-symmetries). Our algorithm maps any $k$-local Hamiltonian to a graph $G=(V,E)$ with $\abs*{V} = \bigO{\poly(n)}$ where $n$ is the number of qubits. Given the well-known result of Babai \cite{babai2016graph}, we exploit graph isomorphism to study the automorphisms of $G$ and arrive at an algorithm quasi-polynomial in $\abs{V}$ for producing samples from effective subspace eigenstates of $H$. Our results rule out exponential separations between StoqAQC and classical computation that arise from hidden symmetries in $k$-local Hamiltonians. Our graph representation of $H$ is not limited to stoquastic Hamiltonians and may rule out corresponding obstructions in non-stoquastic cases, or be useful in studying additional properties of $k$-local Hamiltonians.
\end{abstract}

\maketitle

\paragraph{\textbf{Introduction.---}}
The power of adiabatic quantum computation (AQC) with stoquastic Hamiltonians (StoqAQC), formally introduced in \cite{bravyi2008complexity}, remains difficult to understand. While we know AQC with general Hamiltonians is universal \cite{aharonov2008adiabatic}, one might reasonably expect that stoquastic Hamiltonians -- those that have a known representation with real, non-positive off-diagonal matrix elements -- are more efficiently classically simulable. In this work, we provide a classical, quasipolynomially-efficient algorithm for sampling from eigenstates of $k$-local, stoquastic Hamiltonians which are otherwise widely believed to ``obstruct'' classical simulation algorithms \cite{Hastings,jarret2016adiabatic,bringewatt2018diffusion,hastings2020power}.

\paragraph{\textbf{Background.---}} AQC interpolates over a one-parameter family of Hamiltonians $H(s)$ to produce a quantum state close to the ground state of $H(s_f)$. The computational cost of this process is usually bounded by an adiabatic theorem scaling inversely in the minimal eigenvalue gap $\gamma_{\min} = \min_s \gamma\left(H(s)\right)$ between the two lowest eigenvalues $\gamma(H(s))=\lambda_1(H(s))-\lambda_0(H(s))$ of $H(s)$ \cite{jansen2007bounds, elgart2012note}. Thus, an efficient simulation algorithm must scale at most sub-exponentially with $\gamma_{\min}^{-1}$.

While most researchers do not expect StoqAQC to be capable of universal quantum computing (as evidenced by i.e. \cite{bravyi2017polynomial, bravyi2008complexity, bravyi2008quantum, crosson2018rapid,crosson2020classical}), the primary techniques for simulating these processes rely on Monte Carlo (MC) methods, and tend to focus on estimating certain properties of thermal states. For a full simulation of a StoqAQC process, however, we would like to be able to reproduce ground-state measurement statistics at zero temperature. When comparing AQC to particular MC-based algorithms for this task, there exist a number of ``obstructions'' that do yield exponential separations  \cite{Hastings,jarret2016adiabatic,bringewatt2018diffusion,hastings2020power}. In fact, results published shortly after the first version of this manuscript exploit these obstructions to prove a superpolynomial oracle separation between classical algorithms and StoqAQC with non-$k$-local, but sparse Hamiltonians \cite{hastings2020power}. 

Many classical methods are limited to classically sampling from a statistical distribution proportionate to probability amplitudes of a quantum state $\phi$, rather than probabilities of its measurement outcomes. One can immediately see a fundamental obstruction to this approach in its most abstract---the divergence between $\norm{\phi}_1$ and $\norm{\phi}_2$ can greatly impact sampling statistics. For our purposes, let $\phi$ be the ground state of $H(s)$ and let $\phi(i) \equiv \left\langle i \vert \phi  \right \rangle $. Now, suppose that there exists some $m$ such that $\abs*{\phi(m)}^2/\ltn{\phi}^2=\Omega(1/\poly(n))$, where $n$ is the number of qubits. An efficient quantum process capable of producing the state $\phi$ will take only $\bigO{\poly(n)}$ measurements of $\phi$ in the basis $\set*{\ket{i}}\ni \ket{m}$ to reliably return $m$. Alternatively, suppose one can produce samples of a random variable $X$ taking values in $\set*{i}$ with probability mass function (PMF) $\Pr{X=i}=\abs*{\phi(i)}/\lon{\phi}$. We have that $\frac{\abs*{\phi(m)}}{\lon{\phi}}=\frac{\abs{\phi(m)}}{\ltn{\phi}} \frac{\ltn{\phi}}{\lon{\phi}}\geq 2^{-n/2}\frac{\abs*{\phi(m)}}{\ltn{\phi}}$. When this inequality is nearly achieved, one requires exponentially many samples of $X$ before one expects to return $m$. (For an explicit example, see \cite[Example 0]{jarret2016adiabatic}.) Thus, even an efficient classical process for perfectly producing samples of $X$ may be exponentially slower than its quantum counterpart. Even in an idealized case, it seems one might require a classical process capable of directly sampling from the PMF $\Pr{X=i}=\abs*{\phi(i)}^2/\ltn{\phi}$, which can be rather difficult to obtain.

\paragraph{\textbf{Approach.---}} Current examples of $k$-local Hamiltonians where the inequality above is nearly achieved for the ground state of $H(s)$ rely on symmetries maintained by $H(s)$, constraining adiabatic dynamics to a polynomially-sized effective subspaces  \cite{farhi2002quantum, brady2016spectral, R04, CD14, MAL16, bringewatt2019polynomial}. \textbf{This raises a natural question: is it possible to efficiently classically reproduce the quantum statistics of the ground state of $H(s)$ without knowing its (near) symmetries \textit{a priori}?} In this paper, we answer in the affirmative, thereby providing an algorithm capable of simulating eigenstates previously deemed ``obstructed'' \cite{Hastings,jarret2016adiabatic,bringewatt2018diffusion,bringewatt2019polynomial}.

In particular, we consider symmetries such that the ground state $\phi$ of $H(s)$ is preserved under permutations $\pi$ of computational basis states $\set*{\ket{i}}$, such that $\abs*{\phi(m)} = \abs*{\phi(\pi(m))}$ for all $m \in \set{i}$. As elaborated later, this set of symmetries can be formally described by the automorphism group of an appropriate graph. For simplicity, in this Letter, we restrict attention to symmetries generated by terms with the same number of interacting qubits. This includes all $k$-local obstructions previously discovered, but more general constructions are possible \footnote{See \cite{bringewattjarretinprep} and the supplementary material for further discussion.}. Here, we introduce a classical algorithm that discovers and leverages such symmetries in quasi-polynomial time. The algorithm is upper bounded in its complexity by the greater of graph isomorphism (GI) on graphs with $\poly(n)$ vertices and $\poly(\abs*{S})$, where $S$ is an irreducible set of equivalence classes between computational basis states. Since GI is solvable in quasi-polynomial time \cite{babai2016graph,helfgott2017graph}, our algorithm scales quasi-polynomially in $n$ whenever $\abs*{S}$ is quasi-polynomial in $n$. This rules out exponential separations between AQC and classical algorithms for $k$-local, stoquastic Hamiltonians with large automorphism groups.

Our graph constructions are general, however our restriction to stoquastic Hamiltonians simplifies our study of symmetries. The Perron-Frobenius theorem guarantees that the ground state $\phi$ of a stoquastic Hamiltonian can be written with all non-negative amplitudes in the computational basis (e.g. \cite{horn1990}) \footnote{Actually, discrete nodal domain theorems are generally more useful here, but they too depend upon Perron-Frobenius theorem \cite{BrianDavies}.}. Automorphisms $\pi$ can be expressed as tensor products of Pauli-$X$ operators, and the non-negativity of the ground state means that we can ignore sign changes and only study the case that $\phi(m) = \phi(\pi(m))$. (Note that there exists a bitstring $b \in \set*{0,1}^n$ such that $\bigotimes_i X^{b_i} \ket{m}  = \ket{l}$ for all $l,m \in \set{0,1}^n$.) That is, the lack of sign change means that we do not need to consider the possibility of first conjugating $H(s)$ by unitaries that map $X$ terms to $Y$ terms and vice versa. Although we exploit this simplification, we anticipate these constructions can be generalized to study properties of general $k$-local Hamiltonians.

\paragraph{\textbf{Algebraic graph theory.---}}
The primary contribution of this paper is the formal construction of bijective mappings from $H$ to a pair of graphs, which allows us to reduce the problem of simulation to that of GI. The two mappings are bijective, in the sense that the Hamiltonian can be recovered from each graph. The first, $H\mapsto \hg$, takes $H$ to an exponentially-sized, undirected graph $\hg$ with spectral properties consistent with $H$ itself. The second, $H \mapsto \ctg$, maps $H$ to a vertex-colored, directed graph $\ctg=(V_\ctg,E_\ctg)$ which incorporates all relevant automorphisms of $\hg$. In a sense, the latter is the compact, graph representation of $\hg$ in the same way that $H$, written in terms of Pauli matrices, is the compact representation of $H$ as a matrix. $\ctg$ can be used to efficiently reconstruct and determine equivalent vertices of $\hg$ via GI and, thus, determine the effective subspace of $H$. We first construct $\hg$, then present the algorithm in detail in which we treat the construction of $\ctg$ as a black box. Then, we explicitly provide a construction of $\ctg$. We refer the reader to the supplemental material for a complete, minimal example.

\paragraph{\textbf{Mapping I: $H$ to $\hg$ ---}}\label{ss:hamiltonianstographs}
We consider the weighted graph representation, $\hg=\left(V_\hg,E_\hg,w_\hg\right)$, of a stoquastic Hamiltonian where $V_\hg = \set*{X_b=\bigotimes_{i} X^{b_i}}_{b \in \set*{0,1}^n} \cup \set*{\infty}$ \cite{jarret2018hamiltonian}. That is, we label each vertex usually associated with computational basis state $\ket{b}$ by $X_b$, as $X_b\ket{r} =\ket{b \oplus r}$ for any $r \in \set{0,1}^n$, and we seek an $r$-independent construction. (We also define $Y_b = \bigotimes_i Y^{b_i}$ and $Z_b = \bigotimes_i Z^{b_i}$.) We assume that we are presented with a $k$-local stoquastic Hamiltonian $H \in \mathbb{R}^{\abs*{V_\hg^*}\times \abs*{V_\hg^*}}$, where $V_\hg^* = V_\hg \setminus \{\infty\}$. Specifically,
\be\label{eqn:genH}
    H= -\sum_{\hw{b} \leq k}\alpha_{b}X_b -\sum_{\substack{\hw{b} \leq k \\ \hw{b} \in 2 \mathbb{Z}}} \beta_b Y_b + \sum_{\hw{b} \leq k}\kappa_b Z_b,
\ee
where $\hw{b}$ is the Hamming weight of the bit string $b$ and $\abs*{\beta_b} \leq \alpha_b \; \forall \; b$ \footnote{\label{fn:combgadgets} For simplicity and to avoid too much notation, we consider only Hamiltonians that can be written as these combinations. However our construction will generalize to situations with mixed terms (i.e. $XYY$), by introducing a new gadget that connects together an $X$- and $YY$-gadget.}. From $H$, we identify the set of \emph{edge generators} $K = \set*{X_b \; \vert \; {\alpha_b \neq 0} }$. Let $H_X = \sum_{\hw{b} \leq k}\alpha_{b}X_b$ and $H_Y=\sum_{\substack{\hw{b} \leq k \\ \hw{b} \in 2 \mathbb{Z}}} \beta_b Y_b$. Now,
\begin{align}
    \bra{b'}(H_X + H_Y) X_b\ket{b'} &= \alpha_b + \im^{-\hw{b}}\beta_b\bra{b'}Z_b\ket{b'} \nonumber \\
    &= \alpha_b +  \im^{-\hw{b}} (-1)^{b \cdot b'} \beta_b \nonumber \\
    &=  \alpha_b + \im^{2 b \cdot b' - \hw{b}}\beta_b.
\end{align}
We let $w(u,v)=w(v,u)$ and define edge weights,
\begin{equation}\label{eqn:weights}
    w\left(X_{b'},v\right) = 
        \begin{cases}
            \alpha_{b} + \im^{2 b \cdot b' - \hw{b}}\beta_{b} & \text{if $v=X_{b'\oplus b}$} \\
            \displaystyle\sum_{\hw{b} \leq k}(-1)^{b\cdot b'}\kappa_{b} & \text{$v=\infty$,}
        \end{cases}
\end{equation}
and edges $E_\hg = \set*{ \set*{u,v} \; \vert \; w(u,v)\neq 0}$. The eigenvectors of $H$ satisfy
\begin{equation}
    (w(u,\infty) - \lambda_i) \phi_i(u) = \sum_{v \in V_\hg^*} w(u,v)\phi_i(v),
\end{equation}
where $u\in V_\hg^*$, $\phi(\infty)=0$, and $(\phi_i,\lambda_i)$ is the $i$th eigenvector-eigenvalue pair. In order to identify symmetric subspaces of $H$, we consider identifying all vertices of $\hg$ that are equivalent under an edge-weight preserving automorphism $f:V_\hg\longrightarrow V_\hg$ of $\hg$. We call the set of all such automorphisms $\aut(\hg)$. Now, we sum over equivalence classes $\eqc{u} = \set*{f(u)}_{f \in \aut(\hg)}$: 
\[
    \sum_{u' \in \eqc{u}}(w(u',\infty) - \lambda_i)\phi_i(u') = \sum_{u' \in \eqc{u}} \sum_{v} w(u',v) \phi_i(v),
\]
or
\begin{equation}
    \left(w(u,\infty)-\lambda_i\right) \phi_i(u) = \sum_{\eqc{v}} \omega_{u \eqc{v}}\phi_i(v), \label{eqn:evec}
\end{equation}
where $\omega_{u \eqc{v}} = \sum_{v \in \eqc{v}} w(u,v)$.

This defines our effective Hamiltonian $H':\eqc{V_\hg^*}\times \eqc{V_\hg^*}\longrightarrow \mathbb{R}^+$ on the space of effective vertices $\eqc{V_\hg^*} = \set*{\eqc{u}}_{u \in V_\hg^*}$: 
\be \label{eqn:reduced_hamiltonian}
    H'(\eqc{u},\eqc{v}) = 
    \begin{cases}
        w(u,\infty) & \text{if $\eqc{u} = \eqc{v}$} \\
        - \omega_{u\eqc{v}} & \text{otherwise.}
    \end{cases}
\ee
Note that $\omega_{v\eqc{u}}\neq \omega_{u\eqc{v}}$, but rather $\abs{\eqc{v}} \omega_{v\eqc{u}} = \abs{\eqc{u}} \omega_{u \eqc{v}}$. By \cref{eqn:evec}, the right eigenvector of $H'$ corresponding to eigenvalue $\lambda_0$ is proportional to the eigenvector of $H$ corresponding to eigenvalue $\lambda_0$.

\paragraph{\textbf{The algorithm}.---}Assume that we can map our Hamiltonian to a graph $\hg$ as described above. Our goal is to find an effective graph $\hg'$ with vertex set $\eqc*{V_\hg^*}\cup \{\infty\}$, whose ground state corresponds to that of $\hg$.

For clarity, we break the classical algorithm into two parts: (1) \textproc{FindEffectiveVertices}, which recursively searches $\hg$ to return $V'$ such that $u \in V' \iff V'\cap \eqc{u}=\{u\}$; and (2) \textproc{FindEffectiveGraph}, which takes as input $V'$ and returns $\hg'$. Both routines assume the existence of an ancillary algorithm $\textproc{FindRepresentative}(u,V')=v\in V'\cap \eqc{u}$, whose existence we will later justify. For now, we treat it as an oracle with runtime quasi-polynomial in $n$, $\bigO{\abs*{V'}\quasip(n)}$, where $\quasip(n)$ matches the runtime of the best-known GI algorithm \cite{babai2016graph, helfgott2017graph}.

\begin{algorithm}[H]
 \caption{\label{alg:main} Find Effective Vertices}
 \begin{algorithmic}[1]
  \Function{FindEffectiveVertices}{$u , V'$}
    \If{$u = \emptyset$} 
    \State $u \gets \Call{Random}{V^*_\hg}$ \label{alg:main:rand}
    \State Add $u$ to $V'$
    \EndIf
    \For{$v \in N(u)$}
        \If{$\Call{FindRepresentative}{v,V'} = \emptyset$} 
        \State Add $v$ to $V'$
        \State $V' \gets \Call{FindEffectiveVertices}{v,V'}$
        \EndIf
    \EndFor
    \State \Return $V'$
 \EndFunction
 \end{algorithmic}
\end{algorithm}

\cref{alg:main} returns a set of vertices such that each vertex is distinct and the entire routine, including the \textproc{FindRepresentative} subroutine, takes time $\bigO{\Delta(\hg) \abs{V'}^2 \quasip(n)}$ where $\Delta(\hg)$ is the maximum degree of $\hg$.  Since $V'$ includes precisely one representative of each equivalence class in the connected component of $\hg$, the following routine generates the effective graph $\hg'$.

\begin{algorithm}[H]
 \caption{\label{alg:find_graph} Find Effective Graph}
 \begin{algorithmic}[1]
  \Function{FindEffectiveGraph}{$\Gamma$}
    \State $V' \gets \Call{FindEffectiveVertices}{\emptyset}$
    \State $\Omega_{u v} \gets 0$ for all $u,v \in V'$
    \For{$u \in V'$} \label{alg:find_graph:loop}
        \For{$v \in N(u)$}
            \State $v \gets \Call{FindRepresentative}{v,V'}$
            \State $\Omega_{u v} \gets \Omega_{u v} + w(u,v)$        
        \EndFor
    \EndFor 
    \State \Return $(V',\Omega)$
 \EndFunction
 \end{algorithmic}
\end{algorithm}

The primary loop of $\textproc{FindEffectiveGraph}$ (Line \ref{alg:find_graph:loop}) takes time $\bigO{\Delta(\hg) \abs{V'}^2 \quasip(n)}$, and therefore the total time to obtain the effective graph is also $\bigO{\Delta(\hg) \abs{V'}^2 \quasip(n)}$. 

We can now obtain $H'$ and sample from its eigenstates. For $u,v \in V'$, $\Omega_{uv} = \omega_{u \eqc{v}} = \sum_{v_0 \in \eqc{v}} w(u,v_0)$. Thus, \cref{eqn:reduced_hamiltonian} is well-defined and the operator $H'$ known, even if each entire equivalence class $\eqc{u}$ is not. 

We know that existing methods, such as the power iteration method, can produce the ground state $\phi'$ of $H'$ with error $\epsilon$ in time $\bigO{\log(\epsilon^{-1})/\log(\lambda_1/\lambda_0)}$. Therefore, we can sample the ground state of the full Hamiltonian $H$ in time $\bigO{\log(\epsilon^{-1})/\log(\lambda_1/\lambda_0) + \Delta(\hg)\abs*{V'}^2\quasip(n)}$.

We note that we cannot simply normalize $\phi'$ and expect to obtain appropriate statistics; rather, each $u \in V' \cap \eqc{u}$ represents a sample of the class itself. Thus, we need to sample $\eqc{u}$ with probability $\abs*{\eqc{u}}\phi(u)^2$, where $\phi$ is the appropriately normalized ground state of $H$. By \cref{eqn:evec}, $H'$ has a ground state $\phi'$ that preserves relative amplitudes $\frac{\phi(u)}{\phi(v)} = \frac{\phi'(u)}{\phi'(v)}$ for all $u,v \in V'$. 

Now, we use $\phi'$ and $\abs*{\eqc{u}}$ to sample $u \in \eqc{u} \cap V'$ with probabilities according to $\phi$, 
\begin{equation}\label{eqn:prob}
    {\Pr{\eqc{u}} = \abs*{\eqc{u}}\phi(u)^2 = \frac{\abs*{\eqc{u}}\phi'(\eqc{u})^2}{\sum_{v \in V'} \abs*{\eqc{v}}\phi'(\eqc{v})^2}.}
\end{equation}
Note that for $\omega_{v\eqc{u}}\neq 0$, $\frac{\abs{\eqc{v}}}{\abs{\eqc{u}}}=\frac{\omega_{u\eqc{v}}}{\omega_{v\eqc{u}}}$. Therefore, 
\begin{align}
    \frac{\abs*{\eqc{u}}}{\sum_{v \in V'}\abs{\eqc{v}}} =\left(\sum_{v \in V'} \prod_{e\in P(u,v)} \frac{\omega_{e_0\eqc{e_1}}}{\omega_{e_1\eqc{e_0}}}\right)^{-1},
    \label{eqn:numclasses}
\end{align}
where $P(u,v) \subseteq E_{G'}$ is any directed path connecting $u,v \in V'$. Up to a factor constant for all $u,v$, \cref{eqn:numclasses} determines $\abs*{\eqc{u}}$ and, thus, fully determines \cref{eqn:prob}. 

Repeating this process initializes a new seed in \cref{alg:main} Line \ref{alg:main:rand}, and we return each member of $\eqc{u}$ with equal probability. Furthermore, the random seed guarantees that a sample from a connected set of vertices $V_C$ of $\Gamma$ is returned with probability $\abs*{V_C}/\abs*{V_\Gamma}$, as expected.

\paragraph{\textbf{Mapping II: $H$ to $\ctg$---}} Now, we explicitly give an implementation of $\textproc{FindRepresentative}$. It is helpful to keep in mind that while this construction is unavoidably definition-heavy, the construction naturally reduces our problem to GI. In fact, our approach is somewhat similar to Luks' reduction of graph automorphism to GI \cite{luks1993} or Crawford's formalism of symmetries in clausal theories \cite{crawford1992theoretical}, applied to the study of $\aut(\hg)$. Inspired by the latter, we build what we abusively call a \emph{clausal theory graph} $\ctg$ and our goal is to define an invertible map $M$ such that $M(\hg) = M_0[V_\hg]\cup M_1[E_\hg] = \ctg$. We do so by introducing \emph{gadgets}, smaller graphs that allow us to separately map each $v\in V_\hg$ and $e \in E_\hg$ to specific vertex-colored, directed graphs.  The union of these gadgets forms $\ctg$. We will introduce a number of different types of vertices, where each type is assigned a unique color represented by a superscript (E.g. $\ell^{(\mathpzc{a})}$). In the following $\set{\mathpzc{a}, \mathpzc{b}, \mathpzc{c}, \mathpzc{d}}$ represent fixed, unique colors and $\set{\mathpzc{x}_b, \mathpzc{y}_b, \mathpzc{z}_b}$ represent a distinct set of variable colors which are assigned based on the particular Hamiltonian. Two vertices are identical only if they both have the same color and label. Furthermore, for simplicity, we will abusively write $\{u,v\}$ for an undirected edge and $(u,v)$ for a directed edge. (Thus, $\{u,v\} \in E$ can be read as $\set*{(u,v),(v,u)}\subset E$.) 

First, we define a set of \emph{literals} $L = \set*{Z_{\vec{i}}^{(\mathpzc{a})}}_{i=0}^{n-1}$ and their negations $- L = \set*{-Z_{\vec{i}}^{(\mathpzc{a})}}_{i=0}^{n-1}$, where $\vec{i} = (\delta_{ij})_{j=0}^{n-1}$. We label each vertex $X_b \in V_\hg^*$ by a set of literals  $A(X_b)^{(\mathpzc{b})} = \set*{(-1)^{b_i}Z_{\vec{i}}^{(\mathpzc{a})}}_{i=0}^{n-1}$. We call $A(X_b)^{(\mathpzc{b})}$ an \emph{assignment}. Each $X_b$ corresponds to a gadget, the vertex-colored star graph $M_0(X_b)$ with edge set $E_{M_0(X_b)}=\set*{\set*{\ell^{(\mathpzc{a})},A(X_b)^{(\mathpzc{b})}}}_{\ell^{(\mathpzc{a})} \in A(X_b)^{(\mathpzc{b})}}$. Furthermore, $M_0:V_\hg\longrightarrow \left[L\cup -L\right] \cup \set*{A(X_b))^{(\mathpzc{b})}}_{b \in \set*{0,1}^n}$ is bijective and hence invertible. Thus, $M_0^{-1}(M_0(X_b)) = X_b$.

For each edge generator $X_b \in K$, we construct the graph $G_1(b)$ specified by edge set $E_{G_1(b)}=\bigcup_{b_i \neq 0} \set*{\{Z_{\vec{i}}^{(\mathpzc{a})},X_b^{(\mathpzc{x}_b)}\},\{X_b^{(\mathpzc{x}_b)},-Z_{\vec{i}}^{(\mathpzc{a})}\}}$. Here, $\mathpzc{x}_b = \alpha_b$ (i.e. we assign these vertices a color based on the corresponding coefficient in the Hamiltonian) and we name such vertices \emph{generator vertices}.

Each $G_1(b)$ only captures weights $\alpha_b$ corresponding to edges generated by $X_b\in K$. We still require gadgets that incorporate $\beta_b$, so that we can extract edge weights consistent with \cref{eqn:weights} from $\ctg$. Define
\begin{equation}\label{eq:ub}
U_b =
        \set*{\set*{(-1)^{b_i'}Z_{\vec{i}}^{(\mathpzc{a})}}_{b_i \neq 0}^{(\mathpzc{c})} \vert \im^{2b\cdot b'-\hw{b}}\beta_b<0}_{b'\in\{0,1\}^{n}}.
\end{equation}
Note that when $\beta_b = 0$, $U_b = \emptyset$.

To specify the gadget, we construct the directed vertex-colored graph $G_2(b)=\bigcup_{u_b^{(\mathpzc{c})} \in U_b} g(u_b^{(\mathpzc{c})})$, where each $g(u_b^{(\mathpzc{c})})$ is the star graph with edge set $E_{g(u_b^{(\mathpzc{c})})}= \set*{(\ell^{(\mathpzc{a})},u_b^{(\mathpzc{c})}),(u_b^{(\mathpzc{c})},- \ell^{(\mathpzc{a})}),\{u_b^{(\mathpzc{c})},Y_b^{(\mathpzc{y}_b)}\}}_{\ell^{(\mathpzc{a})} \in u_b^{(\mathpzc{c})}}$. We name the $u_b^{(\mathpzc{c})}$s and the $Y_b^{(\mathpzc{y}_b)}$s \emph{weight generator} and \emph{weight generator cluster} vertices, respectively. Here, $\mathpzc{y}_b = \max_{b'}\alpha_{b'}+\abs*{\beta_b}$ is the color representing the cluster that allows us to extract edge weights of $\hg$.

Finally, we build a graph from the term $\sum_{\hw{b} \leq k}\kappa_b Z_b$, where it helps to write $\kappa_b Z_b = \abs*{\kappa_b}C_b$.
We call each $C_b$ a \emph{clause}, and we identify the set of assignments that ``satisfy'' the clause. For a choice of $b$,
\begin{equation}\label{eq:cb}
    \mathscr{C}_{b} = \set*{\set*{(-1)^{b'_i}Z_{\vec{i}}^{(\mathpzc{a})}}_{b_i\neq 0}^{(\mathpzc{d})} \vert (-1)^{b\cdot b'} = \sign(\kappa_b)}_{b'\in\{0,1\}^{n}}.
\end{equation} 
As $H$ is $k$-local, $\abs*{\mathscr{C}_b}= 2^{\abs{b}-1}\leq 2^{k-1}$. Now, we construct the edge set $E_{G_3(b)} = \set*{\set*{c^{(\mathpzc{d})},Z_b^{(\mathpzc{z}_b)}}\cup\set*{c^{(\mathpzc{d})},\ell^{(\mathpzc{a})}}_{\ell^{(\mathpzc{a})}\in c^{(\mathpzc{d})}}}_{c^{(\mathpzc{d})} \in \mathscr{C}_b}$. We name $c^{(\mathpzc{d})}$'s and $Z_b^{(\mathpzc{z}_b)}$'s \emph{clause} and \emph{clause cluster} vertices, respectively. Here, $\mathpzc{z}_b = \max_{b'} \alpha_{b'} + \max_{b'}\abs{\beta_{b'}}+\abs{\kappa_b}$ is the color representing the cluster of satisfying assignments, which allows us to extract edge weights of $\hg$. These gadgets are summarized in \cref{tab:gadgets} \footnote{As previously mentioned, for full generality one must consider composite gadgets for Hamiltonians with mixed terms (i.e. XYY or XYZ). See supplementary material for details.}. 

\begin{table}
    \centering
    \includegraphics[width=\columnwidth]{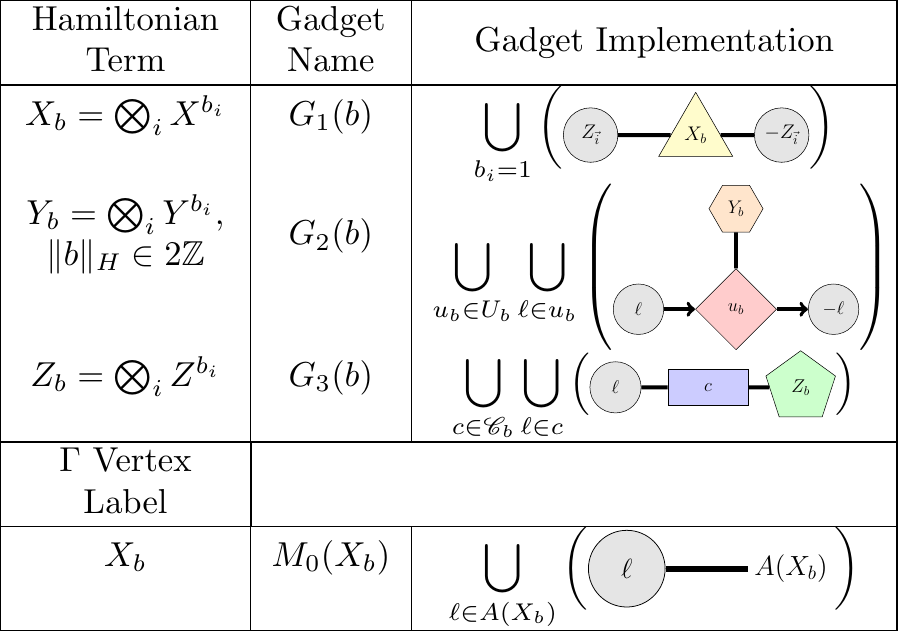}
    \caption{Gadgets summary. Vertex labels are as defined in the text (See \cref{eq:ub,eq:cb}). Superscripts are dropped and represented with a unique color/shape. }
    \label{tab:gadgets}
\end{table}
 
Given these constructions we define the direct mapping $\ctg=M(\hg)=M_0[V_\hg]\cup M_1[E_\hg]$ as follows:
\begin{enumerate}
    \item $M_0(u \in V_\hg)$ as defined above,
    \item $M_1(\{u,X_b u\} \in E_\hg) = G_1(b)\cup G_2(b)$, and
    \item $M_1(\{u, \infty\}\in E_\hg) = \bigcup_{\kappa_b \neq 0} G_3(b)$.
\end{enumerate}
Note that $M_1[E_\hg]$ contains all relevant information about $\aut(\Gamma)$ and, thus, we can study $\ctg' = M_1[E_\hg] \cup \left(L \cup \neg L, \emptyset \right)$. Note that when $\hg$ is connected, $\ctg'=M_1[E_\hg]$. Importantly, by construction, $\abs*{V_{\ctg'}}=\bigO{\poly(n)}$. $M$ also has the following useful property. 

\begin{thm}\label{thm:M_inverse}
    The function $M:\hg \mapsto \ctg$ is bijective. 
\end{thm}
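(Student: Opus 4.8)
The plan is to prove bijectivity by writing down an explicit inverse $M^{-1}$ and checking that it is two-sided, i.e.\ that $\hg=(V_\hg,E_\hg,w_\hg)$ can be uniquely reconstructed from $\ctg$. Because $M(\hg)=M_0[V_\hg]\cup M_1[E_\hg]$ is a union of gadgets whose only shared vertices are the fixed literals $L\cup -L$, the first and really only nontrivial step is to show that this union decomposes \emph{uniquely} into its constituent gadgets. I would get this from the colour bookkeeping: the literal, assignment, weight-generator and clause vertices carry the four fixed colours $\mathpzc{a},\mathpzc{b},\mathpzc{c},\mathpzc{d}$, whereas the generator, weight-generator-cluster and clause-cluster vertices carry numeric colours lying respectively in $\{\alpha_b\}$, in $\max_{b'}\alpha_{b'}+\{|\beta_b|\}$, and in $\max_{b'}\alpha_{b'}+\max_{b'}|\beta_{b'}|+\{|\kappa_b|\}$. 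Since a $G_2(b)$ (resp.\ $G_3(b)$) gadget is emitted only when $\beta_b\neq0$ (resp.\ $\kappa_b\neq0$), the relevant offset is then strictly positive, so these three colour sets are pairwise disjoint and disjoint from $\{\mathpzc{a},\mathpzc{b},\mathpzc{c},\mathpzc{d}\}$; hence every non-literal vertex of $\ctg$ lies in exactly one gadget type, and within a type its neighbourhood among $L\cup -L$, together with the directed/undirected edge structure, pins down which gadget ($G_1(b)$, a star $g(u_b^{(\mathpzc{c})})$ of $G_2(b)$, etc.) produced it.

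Granting the decomposition, the inverse is assembled piece by piece. Inverting $M_0$ on the $\mathpzc{b}$-coloured stars recovers $V_\hg^*$, hence $V_\hg=V_\hg^*\cup\{\infty\}$, exactly as already established above. From the $G_1$ gadgets: the literal-neighbourhood of a $\mathpzc{x}_b$-coloured generator vertex gives the support of $b$, and its colour equals $\alpha_b$; this recovers $K=\{X_b:\alpha_b\neq0\}$, all $\alpha_b$, and $\max_{b'}\alpha_{b'}$. From each $\mathpzc{y}_b$-coloured cluster vertex: its label fixes $b$, its colour minus $\max_{b'}\alpha_{b'}$ gives $|\beta_b|$, and the collection of sign patterns read off from the $\mathpzc{c}$-coloured weight-generator vertices adjacent to it is exactly $U_b$, which together with the parity $\hw{b}\bmod 4$ (already known) determines $\mathrm{sign}(\beta_b)$; absence of such a vertex means $\beta_b=0$. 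Symmetrically, each $\mathpzc{z}_b$-coloured cluster vertex yields $|\kappa_b|=\mathpzc{z}_b-\max_{b'}\alpha_{b'}-\max_{b'}|\beta_{b'}|$ and, via its attached clause vertices $\mathscr{C}_b$, $\mathrm{sign}(\kappa_b)$; absence means $\kappa_b=0$. With all $\alpha_b,\beta_b,\kappa_b$ known, \cref{eqn:weights} reconstructs $w_\hg(X_{b'},\cdot)$ for every $b'\in\{0,1\}^n$, whence $E_\hg=\{\{u,v\}:w_\hg(u,v)\neq0\}$, recovering $\hg$. I would then close by checking the round trips: $M^{-1}(M(\hg))=\hg$ because each step above literally undoes the matching construction step, and in particular any edge of $E_\hg$ generated by $X_b$ forces $\alpha_b\neq0$, so the gadgets appearing in $M_1[E_\hg]$ are exactly $\{G_1(b)\cup G_2(b):X_b\in K\}\cup\{G_3(b):\kappa_b\neq0\}$, matching what the reconstruction expects; $M(M^{-1}(\ctg))=\ctg$ on the image of $M$ then follows identically.

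The main obstacle is precisely the well-definedness of the gadget decomposition: proving the numeric colour families never collide (so that $\ctg$ carries no ambiguity about which gadget each vertex came from) and handling the degenerate cases — $K=\emptyset$; $\beta_b=0$ so that $G_2(b)$ is empty; and $\alpha_b=|\beta_b|\neq0$, where some $X_b$-generated edges carry zero weight and drop out of $E_\hg$ even though $G_1(b)\cup G_2(b)$ is still present. Everything else is a direct unwinding of the definitions; the sign extraction from $U_b$ and $\mathscr{C}_b$ is the one-line observation that $\im^{2b\cdot b'-\hw{b}}=(-1)^{b\cdot b'}\,\im^{-\hw{b}}$ with $\im^{-\hw{b}}\in\{\pm1\}$, since $\hw{b}$ is even for every $Y_b$ term.
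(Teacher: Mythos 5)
Your proposal is correct, and it follows the paper's overall strategy — bijectivity by exhibiting an explicit inverse, with uniqueness of $M$ immediate from the construction and $M_0$ invertible on assignments — but the reconstruction mechanics differ. You first prove that $\ctg$ decomposes uniquely into its gadgets, using the observation that the numeric colour families $\set*{\alpha_b}$, $\max_{b'}\alpha_{b'}+\set*{\abs*{\beta_b}}$, and $\max_{b'}\alpha_{b'}+\max_{b'}\abs*{\beta_{b'}}+\set*{\abs*{\kappa_b}}$ are pairwise disjoint (strict offsets because a $G_2$ or $G_3$ gadget is only emitted when the corresponding coefficient is nonzero); you then recover every coefficient of $H$ — magnitudes from the cluster colours, signs of $\beta_b$ and $\kappa_b$ from which sign-patterns occur as $U_b$ and $\mathscr{C}_b$ vertices (\cref{eq:ub,eq:cb}), using $\im^{2b\cdot b'-\hw{b}}=(-1)^{b\cdot b'}\im^{-\hw{b}}$ with $\hw{b}$ even — and finally rebuild $\hg$ by reapplying \cref{eqn:weights}. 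The paper instead skips an explicit decomposition lemma and reads the weights of $\hg$ directly off $\ctg$: $w(u,\infty)$ from whether shortest paths from $A(u)$ to each clause-cluster vertex have length $\hw{b}$, and $w(u,X_bu)$ from the number of shortest paths between the two assignments through the relevant gadget, with magnitudes again taken from colours. Your route is somewhat longer but more transparent about \emph{why} the inverse exists (it reconstructs $H$ itself, so invertibility of $H\mapsto\hg$ and $H\mapsto\ctg$ come out together), and your explicit attention to the degenerate cases — empty $K$, $\beta_b=0$, and especially $\alpha_b=\abs*{\beta_b}\neq0$, where $X_b$-generated edges of weight zero drop out of $E_\hg$ while $G_1(b)\cup G_2(b)$ remains in the image — covers corner cases the paper's shortest-path formulas pass over silently; the paper's version buys brevity and local formulas for individual weights without needing the global colour-disjointness argument.
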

\cref{thm:M_inverse} is true by construction, but we include proof in the supplemental material for completeness. Now, we define $\ctg(u) = M_0(u) \cup M_1[E_\hg]$ and can state the following theorem.

\begin{thm} \label{thm:automorph}
    There exists a color-preserving isomorphism $\ctg(u) \simeq \ctg(v)$ if and only if $u \equiv v$. 
\end{thm}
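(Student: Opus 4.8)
The plan is to prove both implications by pushing structure through the bijection $M$ of \cref{thm:M_inverse}: the ($\Leftarrow$) direction is a largely mechanical lift of a symmetry of $\hg$, while the ($\Rightarrow$) direction carries the real content.

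For ($\Leftarrow$), I would start from an $f\in\aut(\hg)$ with $f(u)=v$. As discussed above for stoquastic $H$, such an $f$ acts on computational basis states as a tensor product of Pauli-$X$ operators, i.e.\ $f\colon X_b\mapsto X_{a\oplus b}$ for a fixed $a$; it therefore induces the permutation $\sigma$ of $L\cup\neg L$ that negates $Z_{\vec{i}}^{(\mathpzc{a})}$ exactly on the coordinates $i$ with $a_i=1$, and $\sigma(A(X_b))=A(X_{a\oplus b})=A(X_{f(b)})$ for every $b$. I would then extend $\sigma$ to the generator, weight-generator, and clause vertices by following the gadget definitions, so that each $G_1(b)\cup G_2(b)$ goes to $G_1(f(b))\cup G_2(f(b))$ and each $G_3(b)$ to $G_3(f(b))$. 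Because $f$ preserves the edge weights $w$ it preserves the coefficients $\alpha_b,\beta_b,\kappa_b$, and these --- together with the fixed colors $\mathpzc{a},\mathpzc{c},\mathpzc{d}$ --- are exactly the vertex colors, so $\sigma$ is color-preserving; it is then a color-preserving automorphism of $M_1[E_\hg]$ carrying $M_0(u)$ to $M_0(v)$, which is $\ctg(u)\simeq\ctg(v)$. The one point to check with care is that $\sigma$ respects the \emph{directed} edges of the $G_2$-gadgets and the sign-asymmetric literal neighborhoods of the $G_3$-clause vertices: this holds because flipping the coordinates in $a$ maps $U_b$ to $U_{f(b)}$ and $\mathscr{C}_b$ to $\mathscr{C}_{f(b)}$ precisely when $f$ preserves the $\beta_b$- and $\kappa_b$-dependent weights, which is part of $f\in\aut(\hg)$.

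For ($\Rightarrow$), let $\psi\colon\ctg(u)\to\ctg(v)$ be a color-preserving isomorphism. First, $M_1[E_\hg]$ contains no vertex of color $\mathpzc{b}$, so the unique $\mathpzc{b}$-colored vertices $A(X_u)\in\ctg(u)$ and $A(X_v)\in\ctg(v)$ must correspond; hence $\psi(A(X_u))=A(X_v)$, $\psi$ carries the literal neighborhood $A(X_u)$ onto $A(X_v)$, and deleting these two vertices restricts $\psi$ to a color-preserving automorphism $\sigma$ of the common subgraph $\ctg'=M_1[E_\hg]$. I would then read off what $\sigma$ must do: the $\mathpzc{a}$-colored vertices are the literals, so $\sigma$ permutes $L\cup\neg L$; the $\mathpzc{x}_b$-, $\mathpzc{y}_b$-, and $\mathpzc{z}_b$-colored vertices are the generator, weight-generator-cluster, and clause-cluster vertices, so $\sigma$ permutes the gadget families $\{G_1(b)\}$, $\{G_2(b)\}$, $\{G_3(b)\}$ among themselves while matching colors, and hence matching $\alpha_b$, $\beta_b$ (with the sign data recorded in $U_b$), and $\kappa_b$. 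The heart of the argument is a \emph{rigidity} claim: the directed edges of the $G_2$-gadgets and the sign-asymmetric neighborhoods of the $G_3$-gadgets force the permutation of literals underlying $\sigma$ to be sign-consistent, so that it permutes the $n$ pairs $\{Z_{\vec{i}}^{(\mathpzc{a})},-Z_{\vec{i}}^{(\mathpzc{a})}\}$ and therefore descends through $M_0^{-1}$ to a bijection $\hat f$ of $V_\hg$. Since $\hat f$ preserves $M_1[E_\hg]$ together with all its colors and $M$ is invertible, $\hat f$ preserves $E_\hg$ and the weights $w$, i.e.\ $\hat f\in\aut(\hg)$; and $\psi(A(X_u))=A(X_v)$ forces $\hat f(u)=v$, so $u\equiv v$.

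I expect this rigidity/descent step to be the main obstacle. The subtlety is that $\ctg'$ by itself can carry \enquote{spurious} color-preserving automorphisms that move literals in a way that is not sign-consistent --- typically when some qubit appears in no $Y$-term and no $Z$-term, so the $G_2$- and $G_3$-gadgets never separate $Z_{\vec{i}}^{(\mathpzc{a})}$ from $-Z_{\vec{i}}^{(\mathpzc{a})}$. The point one must pin down is that precisely in those cases the corresponding operation already \emph{is} a genuine weight-preserving automorphism of $\hg$, so this extra freedom never produces an isomorphism $\ctg(u)\simeq\ctg(v)$ that is not also witnessed by some $\hat f\in\aut(\hg)$ with $\hat f(u)=v$; formalizing this needs a case analysis over the three gadget types showing that the joint constraints they place on $\sigma$ coincide --- modulo these harmless free-qubit symmetries --- with the conditions that define $\aut(\hg)$. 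Everything else --- the forward lift, the reduction of $\psi$ to an automorphism of $\ctg'$, and the transfer of structure back along $M^{-1}$ via \cref{thm:M_inverse} --- is routine bookkeeping, and a fully worked instance is the minimal example in the supplemental material.
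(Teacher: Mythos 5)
Your ($\Leftarrow$) direction rests on a false description of $\aut(\hg)$: you assert that any edge-weight-preserving automorphism $f$ with $f(u)=v$ acts as $X_b\mapsto X_{a\oplus b}$ for a single fixed $a$, and you only construct the literal permutation $\sigma$ for such XOR-type maps. But $\aut(\hg)$ is the full group of weight-preserving automorphisms of $\hg$, and the equivalences this construction is designed to detect are typically realized by qubit permutations (possibly composed with bit flips), not by a global XOR. Concretely, in the paper's own minimal example $H_{010}$ one has $X_{100}\equiv X_{001}$ via the swap of qubits $0$ and $2$; the only XOR carrying $100$ to $001$ is $a=101$, and conjugation by $X_{101}$ flips the signs of the $Z_{\vec{0}}$ and $Z_{\vec{2}}$ terms, so it is \emph{not} in $\aut(\hg)$. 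Hence your lift, as written, cannot produce the isomorphism $\ctg(X_{100})\simeq\ctg(X_{001})$ that the theorem asserts. The paper's proof of this direction does not classify $\aut(\hg)$ at all: it transports an arbitrary $f\in\aut(\hg)$ through the bijection $M$ of \cref{thm:M_inverse} to obtain a color-preserving automorphism of $\ctg$ carrying $M_0(u)$ to $M_0(f(u))$, which restricts to the desired isomorphism $\ctg(u)\simeq\ctg(v)$.

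Your ($\Rightarrow$) direction follows the same skeleton as the paper -- color forces the unique $\mathpzc{b}$-colored assignment vertices to correspond, the isomorphism restricts to a color-preserving automorphism of the common part $M_1[E_\hg]$, and one then descends to a map on $V_\hg$ -- but the step you yourself call the ``heart of the argument,'' namely that the induced literal permutation can be taken to respect the pairs $\{Z_{\vec{i}}^{(\mathpzc{a})},-Z_{\vec{i}}^{(\mathpzc{a})}\}$ (so that it sends assignments to assignments and hence descends through $A^{-1}$ to a well-defined bijection of $V_\hg$), is left as ``needs a case analysis'' rather than proved. That sign-consistency/rigidity claim \emph{is} the content of this direction: without it the map $\hat f$ is not even defined on all of $V_\hg$, and your remark about spurious automorphisms when a qubit appears in no $Y$- or $Z$-term shows you are aware the issue is nontrivial, but awareness is not a proof. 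The paper closes this gap via \cref{thm:assign_change}: the restriction to $M_1(E_\hg)$ is extended to an automorphism $g$ of the whole graph $\ctg$ by specifying its action on every assignment vertex, and then $f=A^{-1}\circ g\circ A$ is an automorphism of $\hg$ by \cref{thm:M_inverse} with $f(u)=v$. So as it stands the proposal is a plan whose forward half uses an incorrect premise and whose backward half omits its central step; both need repair before it constitutes a proof.
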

\cref{thm:automorph} is also true by construction, and explicit proof can be found in the supplemental material. By exploiting the $k$-local structure of the Hamiltonian in the form of a compact $\ctg$, we are able to reduce our problem from deciding whether $f\in \aut{\hg}$ for an exponentially-sized graph $\hg$ to deciding isomorphism $\ctg(u) \simeq \ctg(f(u))$ of polynomially-sized graphs $\ctg(u),\ctg(f(u))$. 

Armed with this construction and the above theorems we can give the algorithm for \textproc{FindRepresentative}.

\begin{algorithm}[H]
 \caption{\label{alg:find_representative} Check equivalent vertices}
 \begin{algorithmic}[1]
  \Function{FindRepresentative}{$u,V'$}
  \State $\mathbb{G} \gets G(u)$
    \For{$v \in V'$}
        \State $\mathbb{G}' \gets G(v)$
        \If{$\mathbb{G} \simeq \mathbb{G}'$} \Return $v$  \EndIf \label{alg:find_representative:iso}
    \EndFor
    \State \Return $\emptyset$
 \EndFunction
 \end{algorithmic}
\end{algorithm}
We note that $\ctg(u)$ can be constructed in time $\bigO{\poly(n)}$ and that color-preserving GI on bipartite, directed graphs is GI-complete \cite{zemlyachenko1985graph}. Additionally, \cref{alg:find_representative} with stoquastic $k$-local Hamiltonians is GI-complete. To see this, for any two graphs $S,S'$, label vertices such that $V_{S}\cap V_{S'} = \emptyset$ and let $H=\sum_{\set*{i,j} \in E_{S}}Z_{\vec{i}}Z_{\vec{j}} + \sum_{\set*{i,j} \in E_{S'}}Z_{\vec{i}}Z_{\vec{j}}$. Then, $\ctg\left(X_{\bigoplus_{i \in V_S} \vec{i}}\right) \simeq \ctg\left(X_{\bigoplus_{i\in V_{S'}}\vec{i}}\right)$ iff $S \simeq S'$. In the other direction, \cref{alg:find_representative} uses GI as a subroutine. The best known algorithm for GI takes time $\quasip(n) = 2^{\bigO{\log(n)^{\bigO{1}}}}$ \cite{babai2016graph,helfgott2017graph}, and therefore the entire routine takes $\bigO{\abs*{V'}\quasip(n)}$. 

\paragraph{\textbf{Discussion.---}}

Our results can be extended to near-symmetries via a straightforward application of the Davis-Kahan $\sin \Theta $ theorem \cite{yu2015useful, Bhatia}. In particular, let $H,\Delta$ be Hamiltonians where $H$ has ground state density matrix $\rho$ and $H+\Delta$ has ground state density matrix $\rho_\Delta$ from which we would like to sample. Then,
\[
{\sqrt{1-F(\rho,\rho_\Delta)} \leq \frac{\pi}{2}\frac{ \norm{\left(I-\rho_\Delta\right)\Delta\rho}_F}{\lambda_1(H) - \lambda_0(H+\Delta) }
\leq\frac{\pi}{2}\tan^2\Phi}
\]
where $F(\rho, \rho_\Delta)=\|\rho \rho_\Delta\|_F^2$ is the fidelity, $\tan^2\Phi=\frac{\abs*{\langle \Delta \rangle_{\rho}}}{\gamma(H) - \langle \Delta \rangle_{\rho}}$, and $\gamma(H)$ is the eigenvalue gap of $H$. Thus, $F(\rho,\rho_\Delta) \geq 1 - \frac{\pi^2}{4}\tan^4\Phi$.

If one has \textit{any} procedure for producing a guess $\rho$, one can later check that $\langle \Delta \rangle_{\rho}$ is small enough to guarantee $\norm*{\rho - \rho_\Delta}\leq \epsilon$. As a limited example, suppose one perturbs each $\alpha_b,\beta_b,\kappa_b$ in $H$ by at most $\delta$. Then, $\norm{\Delta}_F \leq \delta \norm{H}_F$. Therefore, provided that $\delta \leq \epsilon \frac{\gamma}{\norm{H}_F}$, $\tan^2\Phi \leq \frac{\epsilon}{1-\epsilon}$. Hence, we can achieve arbitrary precision $\epsilon$ while perturbing each of $\alpha_b,\beta_b,\kappa_b$ by $\delta = \bigO{\epsilon\gamma/\norm{H}_F}$, where we have assumed $\gamma/\norm{H}_F = \Omega\left( \poly^{-1}n\right)$ throughout. 

Alternatively, an approximate GI algorithm \cite{arvind2012approximate} might suffice to implement \textproc{FindRepresentative}; more general approximation algorithms are left for future work.

\textbf{Conclusion.---} Our algorithm rules out the existence of an exponential separation between classical algorithms and StoqAQC using Hamiltonians with effective subspaces with size $\abs*{V'}$ scaling subexponentially in $n$, a class containing all previously known $k$-local obstructions \cite{Hastings, jarret2016adiabatic, bringewatt2018diffusion}.

Beyond these symmetric and approximately symmetric problems, whether all $k$-local stoquastic Hamiltonians are quasi-polynomially simulable remains an open question. We conjecture that families of Hamiltonians that lack near-symmetries typically have exponentially small gaps, suggesting that they are difficult for AQC \footnote{A large spectral gap is not \textit{necessary} for successful AQC. It is not too difficult to construct examples where StoqAQC succeeds despite a small gap, though they often appear pathological and one might expect classical methods to be similarly successful.}. This conjecture is largely motivated by the fact that avoiding exponentially small gaps requires pathologically smooth transitions, as explained in the supplemental material \footnote{As noted in the supplement, this result is similar to, but in terms of gap-analysis, stronger than that in \cite{farhi2008make}}.  Proving this, combined with our results here and a better understanding of those near-symmetries that we can efficiently approximate, would reduce understanding the simulability of StoqAQC to better understanding the significance of the gap in both classical and quantum cases.

\textbf{Acknowledgements.---} The authors would like to thank Andrew Glaudell and Brad Lackey for useful discussions and suggestions. J.B. was supported in part by the DOE CSGF program (award No. DE-SC0019323). M.J. was supported in part by AFRL (award No. FA8750-19-C-0044). \PIRA  \, J.B. also acknowledges partial funding by the DOE BES Materials and Chemical Sciences Research for Quantum Information Science program (award No. DE-SC0019449), DOE ASCR FAR-QC (award No. DE-SC0020312), NSF PFCQC program, DOE ASCR Quantum Testbed Pathfinder program (award No. DE-SC0019040), AFOSR, ARO MURI, ARL CDQI, and NSF PFC at JQI.

\newpage
\onecolumngrid
\begin{center}
\textbf{Supplemental material for ``Effective gaps are not effective: quasipolynomial classical simulation of obstructed stoquastic Hamiltonians''} 
\end{center}
\twocolumngrid

\section{S1: Minimal Example}
\label{app:minexample}
Here we give a minimal, working example of our algorithm. We consider a Hamming-symmetric Hamiltonian 
\begin{equation}
    H=-\sum_{\hw{b}=1}X_b + \sum_{\hw{b}=1} Z_b,
    \label{eq:hamsym}
\end{equation}
and define the Hamming-symmetric Hamiltonian with base $v$ as $H_{v}=X_{v} H X_{v}$, yielding
\begin{equation}
    H_v=-\sum_{\hw{b}=1}X_b +\sum_{\hw{b}=1} (-1)^{v\cdot b} Z_b.
    \label{eqn:Hexample}
\end{equation}

This corresponds to a graph $\hg$ where $V^*_\hg$ is a hypercube with all edge weights $\alpha_b=1$. The vertex $\set{\infty}$ is connected to every vertex $u \in V^*_\hg$ by an edge weight $w(u,\infty) = \sum_{\abs{b}=1} (-1)^{v \cdot b + u \cdot b}$. In practice, this graph $\hg$ is too large to handle, but for illustrative purposes, we consider only 3 qubits so that we can track the whole algorithm by hand. Furthermore, in order to be specific, we assume that $v=010$. That is we obtain,

\begin{align*}
H_{010}&= -X_{100}-X_{010}-X_{001}+Z_{100}-Z_{010}+Z_{001} \\
&=-X_{\vec{0}}-X_{\vec{1}}-X_{\vec{2}}+Z_{\vec{0}}-Z_{\vec{1}}+Z_{\vec{2}}
\label{eqn:Hexample}
\end{align*}
which has a corresponding graph $\hg$ depicted in \cref{fig:rampexamplehg}.

\begin{figure}[h]
\includegraphics[width=70mm]{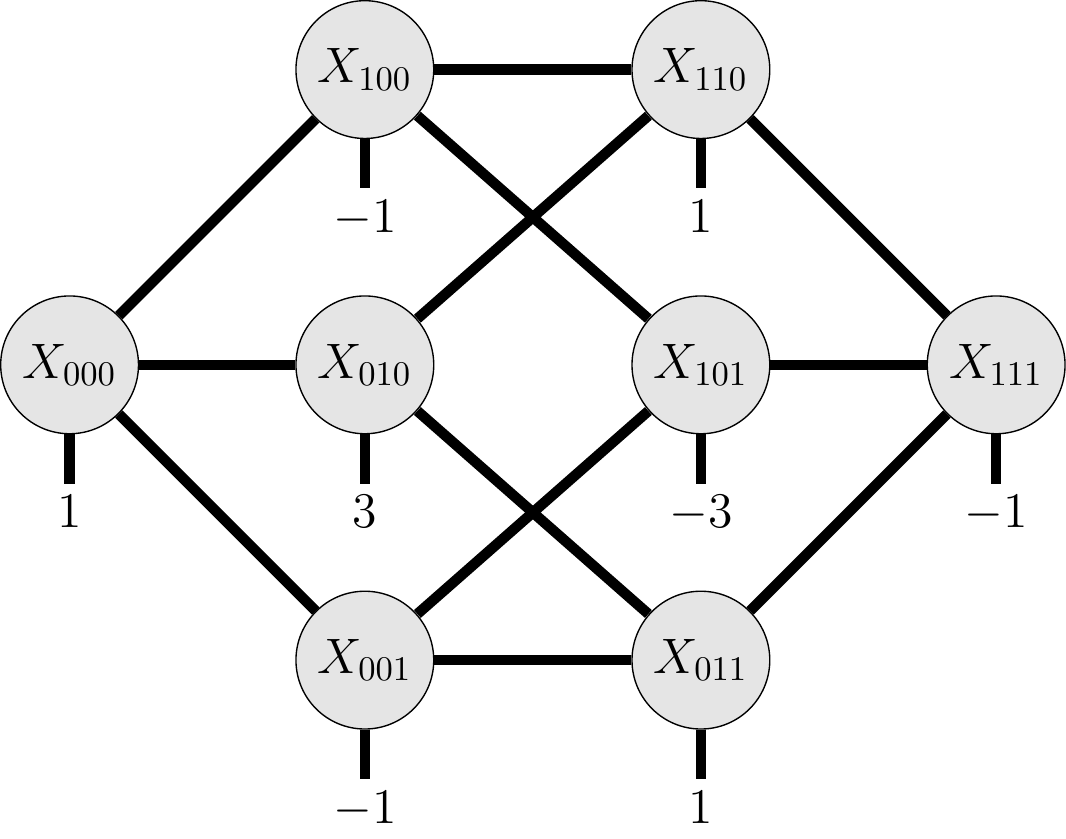}
\caption{$\hg$ for $H_{010}$. Each vertex is labeled by a member of $X_b$. The disconnected edges connect to the boundary vertex $\infty$ and are labeled by their weights.}
\label{fig:rampexamplehg}
\end{figure}

Below, we follow the steps in the main paper to obtain the clausal theory graph for $H_{010}$ excluding assignments, $M_1(E_\ctg)$ (See \cref{fig:rampexamplectg}). 

In particular, as we have three qubits we introduce the six literal vertices $Z_{\vec{0}}^{\mathpzc{(a)}}, -Z_{\vec{0}}^{\mathpzc{(a)}}, Z_{\vec{1}}^{\mathpzc{(a)}}, -Z_{\vec{1}}^{\mathpzc{(a)}}, Z_{\vec{2}}^{\mathpzc{(a)}}$ and  $-Z_{\vec{2}}^{\mathpzc{(a)}}$. We also have three single bit flip edge generators $X_{\vec{0}}, X_{\vec{1}}$ and  $X_{\vec{2}}$. Each generator has an associated vertex in the clausal theory graph, which joins its respective literals by an undirected edge. That is, since $X_{\vec{0}}$ maps the computational basis state $00\dots0$ to $10\dots0$, we use a gadget to connect $Z_{\vec{0}}$ to $Z_{\vec{1}}$. 

As there are no Pauli-$Y$ terms in the Hamiltonian, we introduce no weight generators. Finally, we add clause vertices corresponding to the diagonal potential terms of the Hamiltonian. In particular, from \cref{eqn:Hexample}, we introduce three clause cluster vertices $Z_{\vec{0}}^{(\mathpzc{z}_b=1)}, -Z_{\vec{1}}^{(\mathpzc{z}_b=1)}$ and  $Z_{\vec{2}}^{(\mathpzc{z}_b=1)}$ corresponding to to their respective terms in the Hamiltonian. For each clause cluster vertex, we identify the associated set of clause vertices. Recall for a particular choice of $b\in\{100, 010, 001\}$, we have the associated clause cluster set

\begin{equation}
    \mathscr{C}_{b} = \set*{\set*{(-1)^{b'_i}Z_{\vec{i}}}_{b_i\neq 0}^{(\mathpzc{d})} \vert (-1)^{b\cdot b'} = \sign(\kappa_b)}_{b'\in\{0,1\}^{n}}.
\end{equation} 
where in our case $\kappa_{100}=\kappa_{001}=1$ and $\kappa_{010}=-1$. Consider, for example, the $\mathscr{C}_{b = 100}$ case corresponding to the clause cluster vertex $Z_{\vec{0}}^{(\mathpzc{z}_b=1)}$. Then, $(-1)^{b\cdot b'} = \sign(\kappa_b) = 1$, whenever $b'\in\{000, 001, 010, 011\}$. Hence, noting that $b_i \neq 0$ only when $i=0$, $\mathscr{C}_{100}= \set*{\set*{Z_{\vec{0}}}^{(\mathpzc{d})}}$. Similarly, the clause cluster vertices $-Z_{\vec{1}}^{(\mathpzc{z}_b=1)}$ and $Z_{\vec{2}}^{(\mathpzc{z}_b=1)}$ have single associated clause vertices $\{-Z_{\vec{1}}\}^{(\mathpzc{d})}$ and $\{Z_{\vec{2}}\}^{(\mathpzc{d})}$ respectively. The clause cluster vertices connect their associated clause vertex to its corresponding literal vertices or, in this case, vertex. 

\cref{fig:rampexamplectg} shows the final construction. One can see that including assignments for $u,v\in\hg$ demonstrates $u\equiv v \iff \ctg(u)\simeq\ctg(v)$. This will be shown explicitly in the following walk-through of the algorithm.

\begin{figure}[t]
\includegraphics[width=70mm]{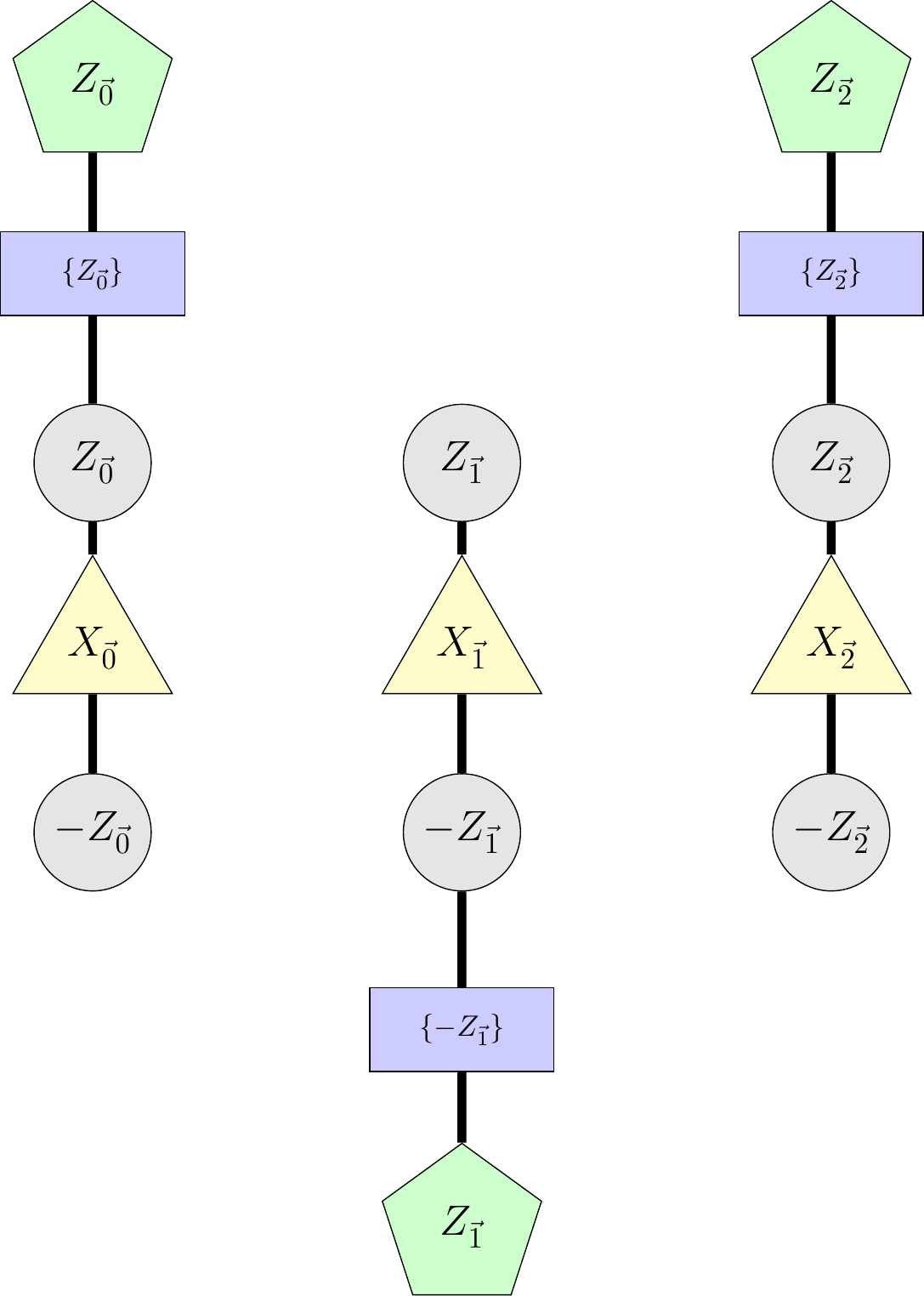}
    \caption{Full example of $M_1(E_\hg)$ for $H_{010}$.}
    \label{fig:rampexamplectg}
\end{figure}

 We start by calling Algorithm 1 with a null argument, or $\textproc{FindEffectiveVertices}(\emptyset)$. Then, in Line 3, we choose a random vertex $u \in V_\hg^*$. Suppose that $u=X_{100}$; then Line 4 adds it to $V'$, so that now $V' = \set*{X_{100}}$. We now reach Line 5 with $N(X_{100}) ={X_{000},X_{110},X_{101}}$. The loop proceeds to recursively check each neighbor $v \in N(X_{100})$. Line 6 calls Algorithm 3 with argument $(v,V')$ and if $v$ is not equivalent to an element already in $V'$, Line 7 adds $v$ to $V'$. Then, in Line 8, the Algorithm 1 calls itself with argument $(v,V')$. 
 
 To walk through these steps, suppose that in the first iteration of the loop, the first neighbor we query is $v = X_{000}$. We generate $\ctg(X_{100})$ and $\ctg(X_{000})$ as shown in \cref{fig:comp1a} and \cref{fig:comp1b}, respectively. One can see that these graphs are not isomorphic, so we add $X_{000}$ to $V'$. At this point, $V'=\set{X_{100}, X_{000}}$.

Having added $X_{000}$ to $V'$, we continue the recursion and call $\textproc{FindEffectiveVertices}(X_{000},V')$. Since $v\neq \emptyset$, we immediately proceed to the loop in Line 5 and begin checking the neighbors of $X_{000}$. Suppose the first neighbor called is $X_{001}$. (See \cref{fig:comp1c} for the corresponding clausal theory graph.) Now, we see that $\ctg(X_{001})\simeq\ctg(X_{100})$, so that $X_{001}\equiv X_{100}$. Because $X_{100} \in V'$, we do not add $X_{001}$ to $V'$, this branch of the recursion terminates, and we continue checking the remaining neighbors of $X_{000}$.

Next, we check $X_{010}$, find that it is not equivalent to any member of $V'$, and add to $V'$. Now, we call $\textproc{FindEffectiveVertices}(X_{010},V')$. Note that in this call, all of the members of $N(X_{010})$ are already equivalent to members of $V'$. Thus, Lines 5-8 complete without adding any new members to $V'$, and we return $V'$ unchanged. We now return to the parent process, which had been considering members of $N(X_{000})$. Since each member of $N(X_{000})$ has been queried, Line 9 returns $V'$ to its parent process, which was checking the members of $N(X_{100})$. We next query $X_{110}$ and find $X_{110}\equiv X_{000}$, so $X_{110}$ is not added to $V'$. In the next iteration of the loop, however, $X_{101}$ is not equivalent to anything already in $V'$ so it is added and we call $\textproc{FindEffectiveVertices}(X_{101},V')$.

The remaining unchecked neighbor of $X_{101}$ is $X_{111}$, which is equivalent to $X_{100}$, so we return $V'$ to its parent process and, in turn, the algorithm completes and exits, returning $V'=\set{X_{100}, X_{000}, X_{010}, X_{101}}$. Thus, as expected, we have one representative of each equivalence class of vertices of the original graph $\hg$ in \cref{fig:rampexamplehg}.

We note that in this example every vertex is checked, so it may not be immediately obvious that this procedure is efficient in general. However, as described in the main text, this algorithm requires checking only a polynomial number of vertices (provided the symmetries lead to a polynomial sized effective subspace). See \cref{fig:scaling} to see which vertices need to be checked by this algorithm for a Hamming symmetric example $H_v = H_{\set*{00\dots0}}$ with $5,6,$ and $7$ qubits. It is clear that while the number of vertices in the graph increases exponentially with the number of qubits, only polynomially many vertices are actually queried. 

\begin{figure*}[t]
\qquad\subfloat[$\ctg(X_{100})$]{
\includegraphics[width=50mm]{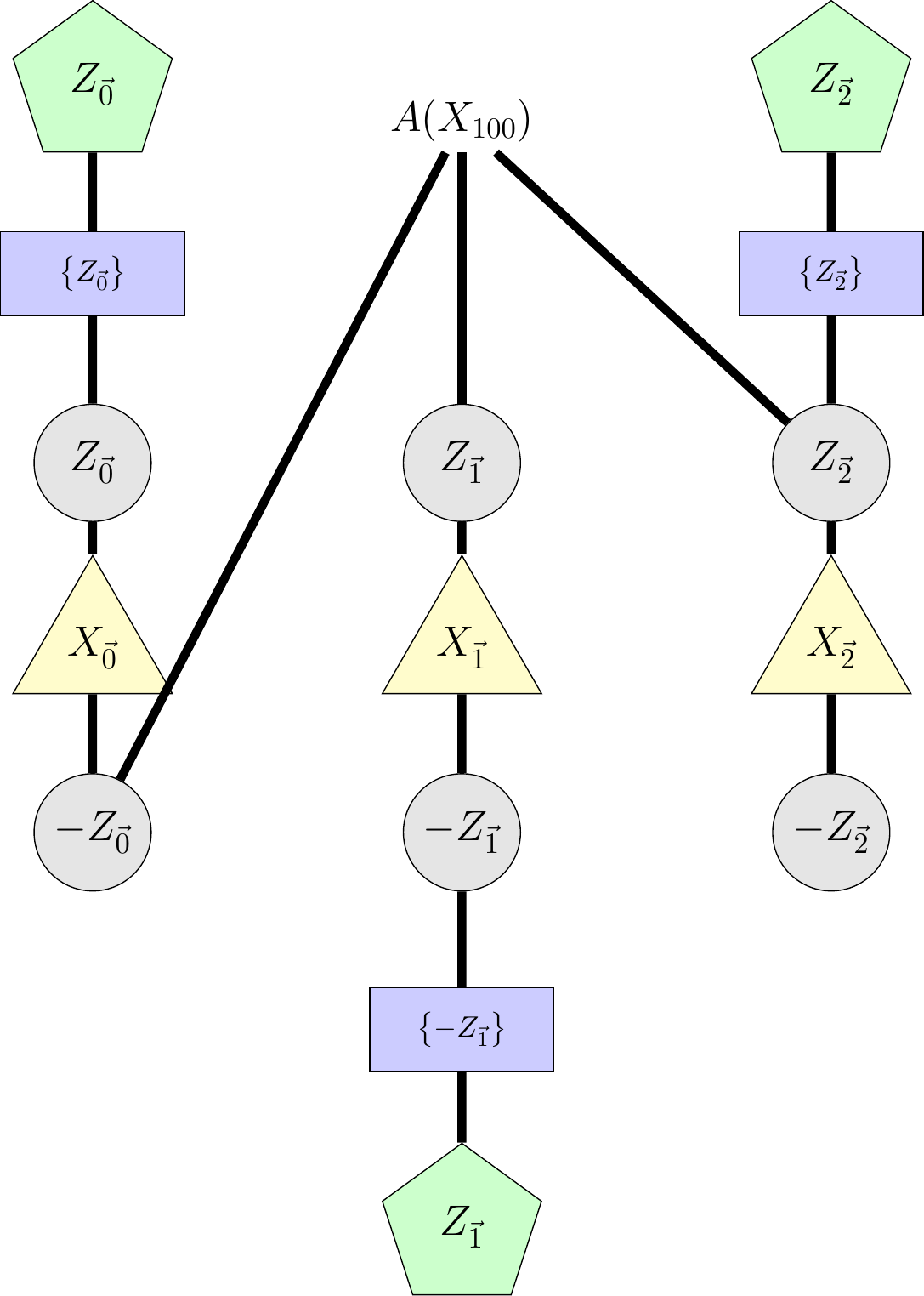}
\label{fig:comp1a}
}
\qquad\subfloat[$\ctg(X_{000})$]{
\includegraphics[width=50mm]{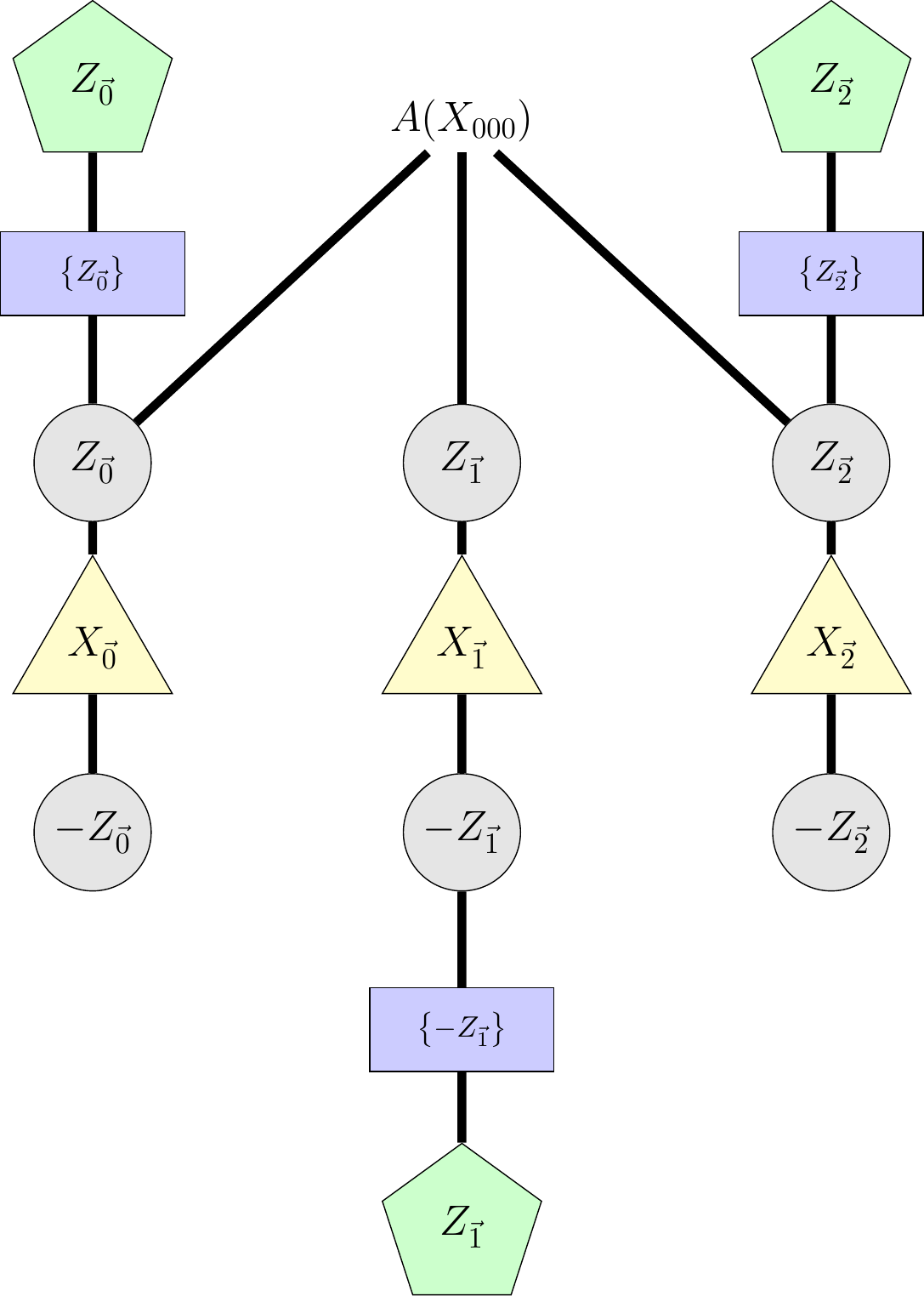}
\label{fig:comp1b}
}
\qquad\subfloat[$\ctg(X_{001})$]{
\includegraphics[width=50mm]{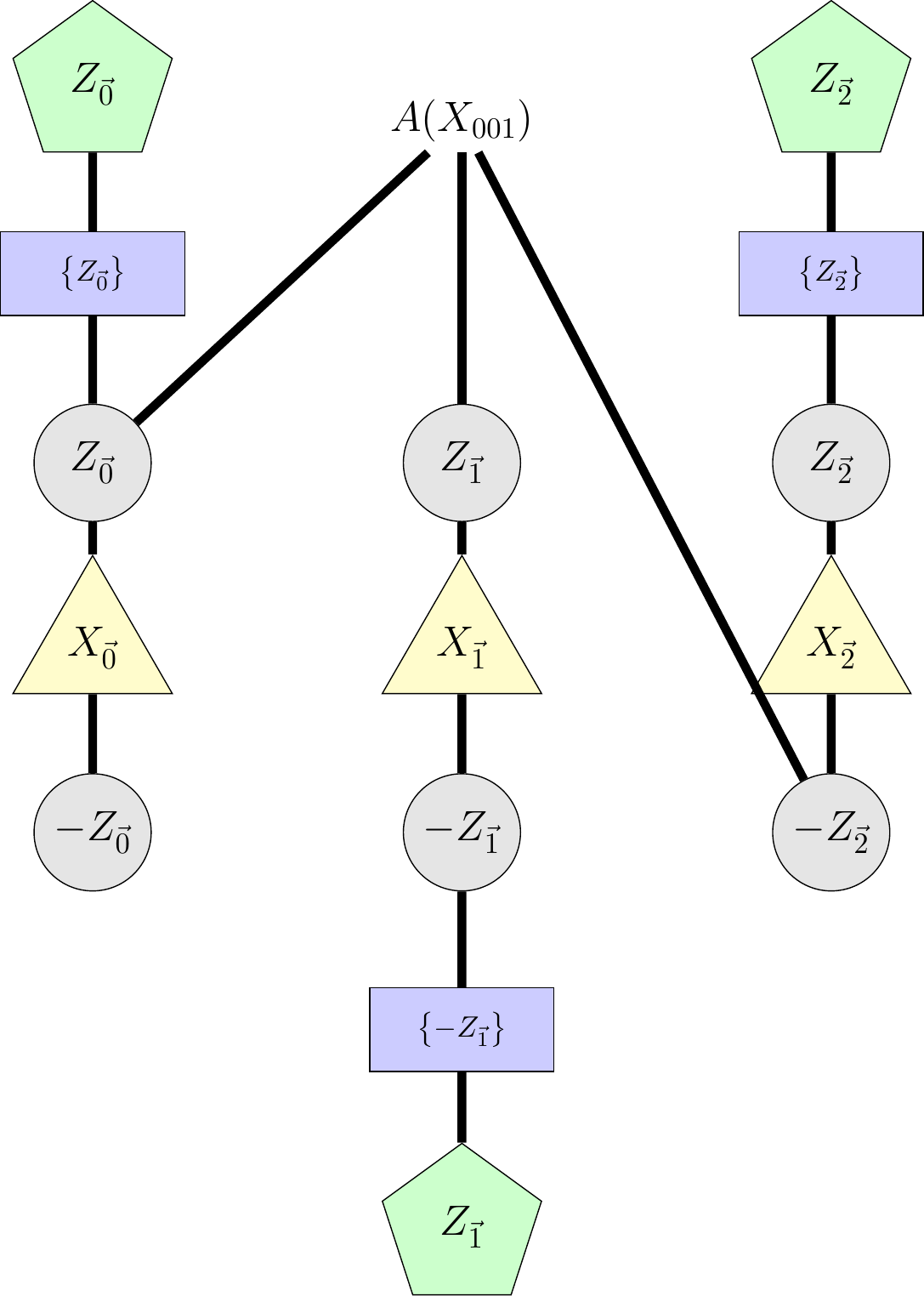}
\label{fig:comp1c}
}
\caption{Some clausal theory graphs used in the minimal example. Comparing \cref{fig:comp1a} to \cref{fig:comp1b} shows $\ctg(X_{100})\not\simeq\ctg(X_{000}) \iff X_{100}\not\equiv X_{000}$. Comparing \cref{fig:comp1a} to \cref{fig:comp1c} shows  $\ctg(X_{100})\simeq\ctg(X_{001}) \iff X_{100}\equiv X_{001}$.}
\end{figure*}

\begin{figure*}[t]
    \centering
    \qquad\subfloat[$n=5$]{
    \includegraphics[width=50mm]{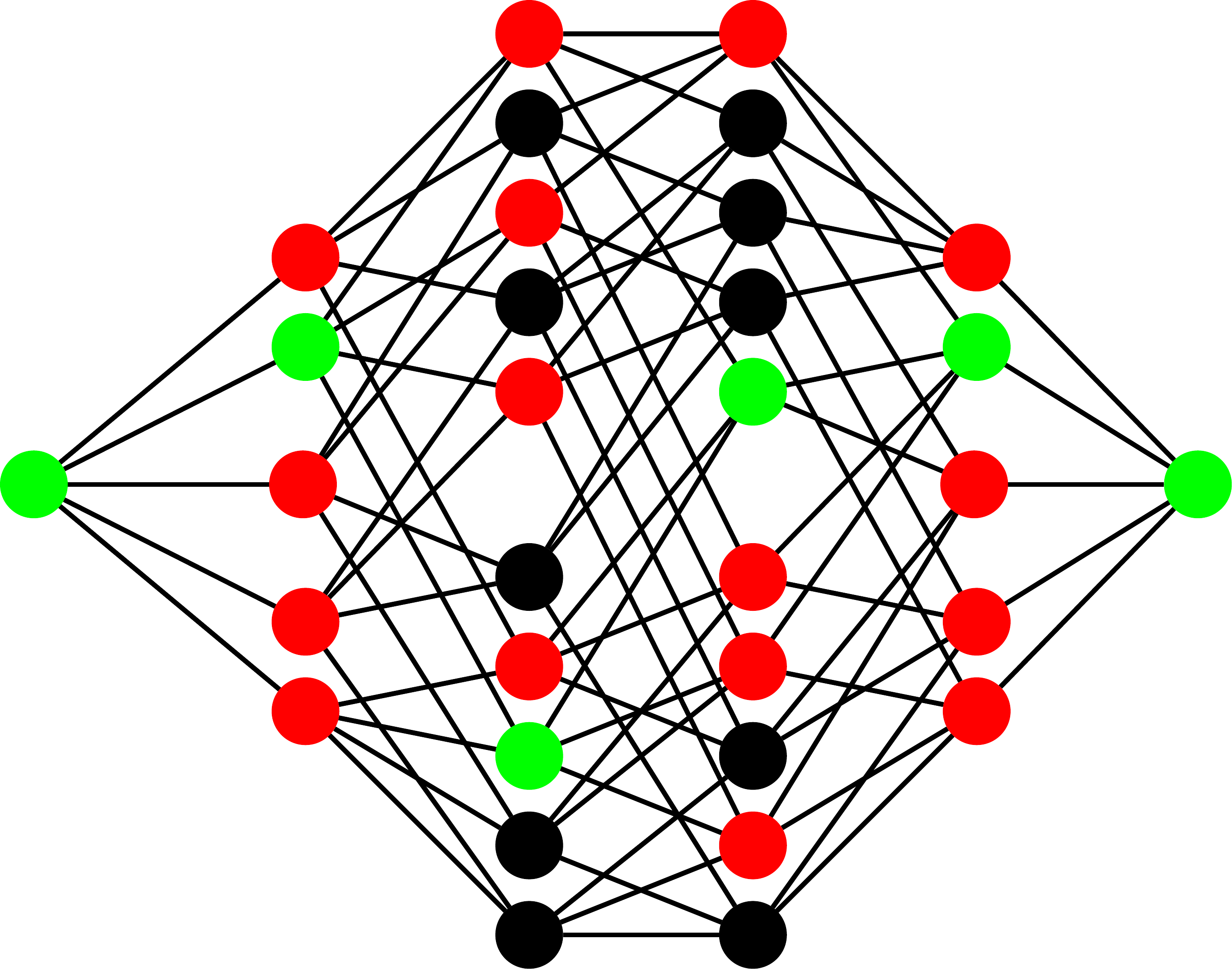}
    \label{fig:scalinga}
    }
    \qquad\subfloat[$n=6$]{
    \includegraphics[width=50mm]{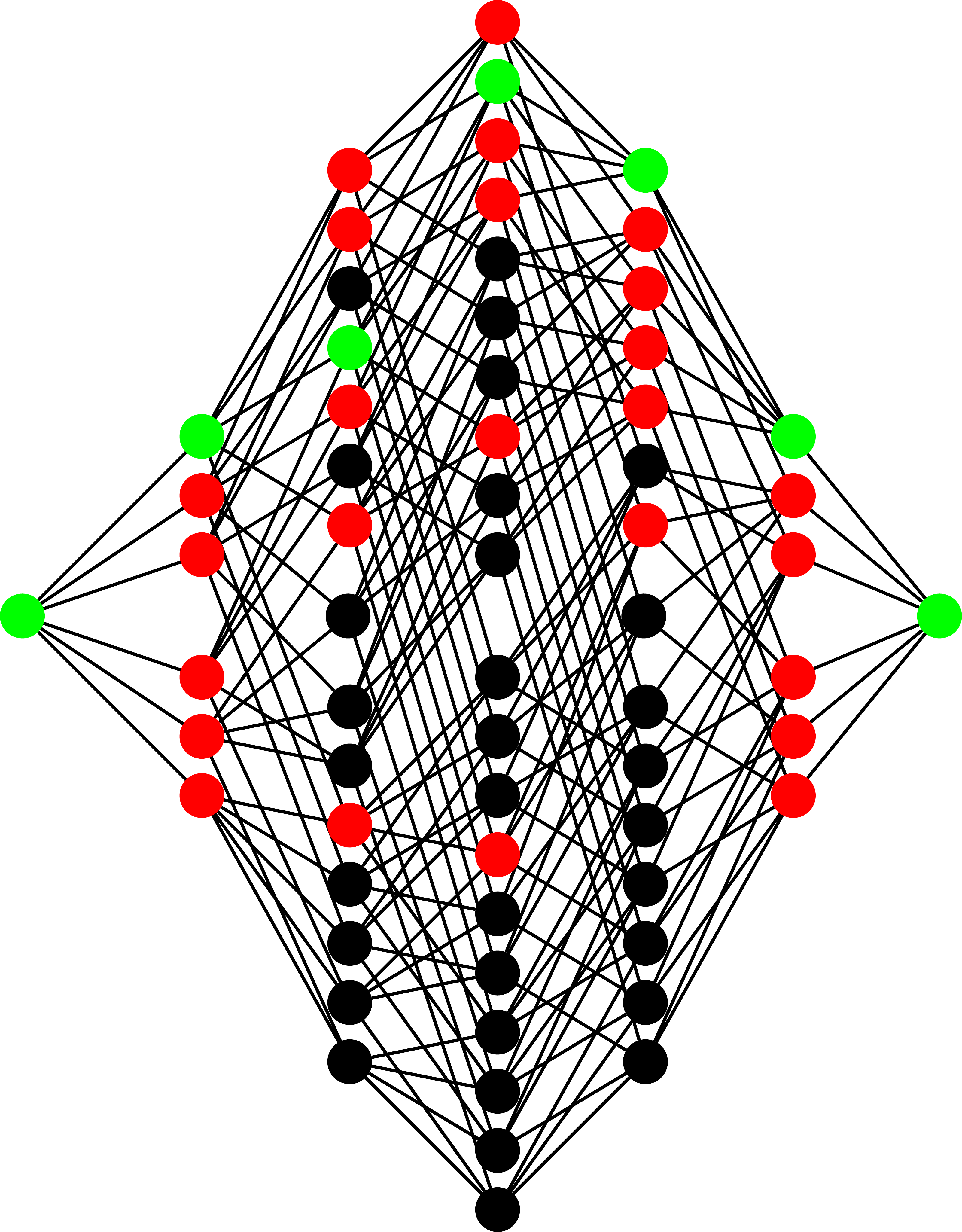}
    \label{fig:scalinga}
    }
    \qquad\subfloat[$n=7$]{
    \includegraphics[width=50mm]{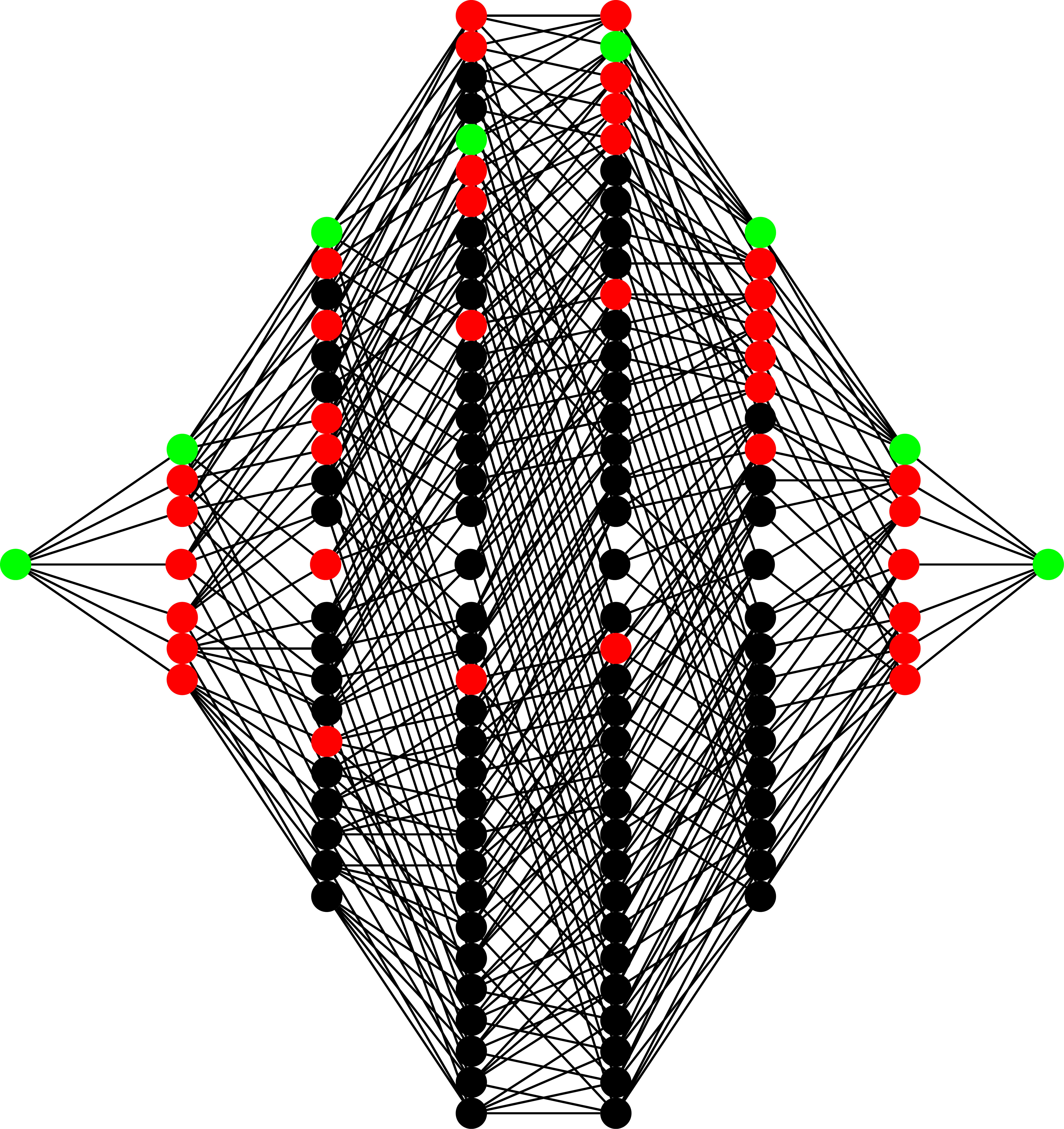}
    \label{fig:scalinga}
    }
    \caption{Visited vertices of $\hg$ in \textproc{FindEffectiveVertices} for Hamming symmetric example on a hypercube for $n=5,6,7$ as in \cref{eq:hamsym}. Green vertices are effective vertices, red vertices are checked, but not added to the set of effective vertices, and black vertices are unchecked. Columns of vertices are in Hamming weight order. Note the number of vertices grows exponentially in $n$, but the number of checked vertices grows only polynomially.}
    \label{fig:scaling}
\end{figure*}

Once we have our effective set of vertices we proceed to calculate the weights $\Omega_{uv}$ in the main loop of Algorithm 2. One can walk through this loop by hand with ease obtaining the matrix

\[
\Omega=\begin{pmatrix}
0 & 2 & 0 & 1\\
2 & 0 & 1 & 0\\
0 & 3 & 0 & 0\\
3 & 0 & 0 & 0
\end{pmatrix}
\]
where the rows and columns are in the same order as listed above in $V'$. Given $\Omega$ and $V'$ we can calculate 

\[
H'=\begin{pmatrix}
-1 & -2 & 0 & -1\\
-2 & 1 & -1 & 0\\
0 & -3 & 3 & 0\\
-3 & 0 & 0 & -3
\end{pmatrix}
\]
from which we can compute the ground state in the symmetric subspace, $\phi'$. We can also compute the size of each equivalence class using \cref{eqn:numclasses} from the main text (repeated here)

\begin{align}
    \frac{\abs*{\eqc{u}}}{\sum_{v \in V'}\abs{\eqc{v}}} =\left(\sum_{v \in V'} \prod_{e\in P(u,v)} \frac{\omega_{e_0\eqc{e_1}}}{\omega_{e_1\eqc{e_0}}}\right)^{-1}
\end{align}
which yields

\begin{align*}
    \abs{\eqc{X_{100}}}&=\abs{\eqc{X_{000}}}=3\\
    \abs{\eqc{X_{010}}}&=\abs{\eqc{X_{101}}}=1.
\end{align*}

Finding the ground state eigenvector of $H'$ gives $\phi'=(3+2 \sqrt{2}, 1+\sqrt{2}, 1, 7 + 5\sqrt{2})$. Now, we compute the probability of sampling each equivalence class using equation \cref{eqn:prob} from the main text (repeated here),

\begin{equation}
    {\Pr{\eqc{u}} = \abs*{\eqc{u}}\phi(u)^2 = \frac{\abs*{\eqc{u}}\phi'(\eqc{u})^2}{\sum_{v \in V'} \abs*{\eqc{v}}\phi'(\eqc{v})^2}.}
\end{equation}

That is, 

\begin{align*}
    \mathrm{Pr}(\abs{\eqc{X_{100}}})&\approx0.32 \quad \mathrm{Pr}(\abs{\eqc{X_{000}}})\approx0.055\\
    \mathrm{Pr}(\abs{\eqc{X_{010}}})&\approx0.003 \quad \mathrm{Pr}(\abs{\eqc{X_{101}}})\approx0.622.
\end{align*}

We then return a member of $V'$ according to the above-stated probability distribution. We can easily verify that this agrees with the probability that, upon computational-basis measurements, we return a member of the corresponding equivalence class of the full Hamiltonian.

In addition to this minimal full example we also provide a depiction of a more complicated clausal theory graph that includes all the different gadget described in the main text (see \cref{fig:fullExample}). Note that this example is not stoquastic so that all the different types of gadgets can be shown clearly on one graph. Thus this example is for illustrative purposes only.

\begin{figure}[h]
    \includegraphics[width=80mm]{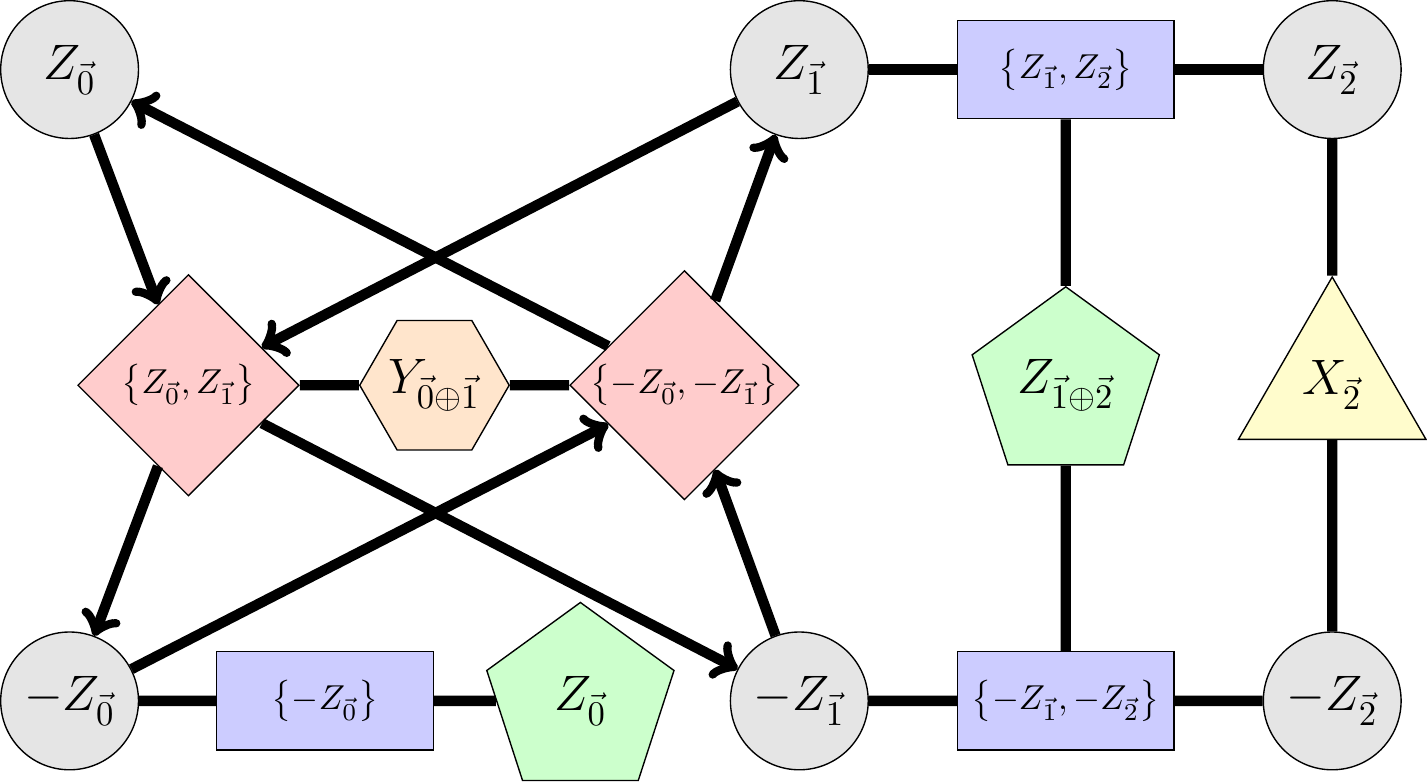}
    \caption{Full example of $M_1(E_\hg)$ for $H=-X_{001}-Y_{110}-Z_{100}+Z_{011}$. Note, $H$ is not stoquastic as to have a legible graph with all relevant types of vertices/edges.}
    \label{fig:fullExample}
\end{figure}

\section{S2: Additional Gadgets}\label{app:summaryofgadgets}
\paragraph{\textbf{Generality:}}In addition to the gadgets corresponding to individual $X_b, Y_b, Z_b$ terms of the Hamiltonian one must also consider additional composite gadgets to achieve full generality for stoquastic Hamiltonians with symmetries between ``like" terms -- i.e. those with interactions between the same number of qubits. Such a gadget is constructed as follows. We note that for each double-typed term of the form $X_b Y_{b'}, X_b Z_{b''}$, or $Y_{b'}Z_{b''}$ appearing in $H$, one would need to introduce re-colored versions of the gadgets $G_1(b), G_2(b')$, and $G_3(b'')$ in order to avoid conflicting with other terms that might appear independently, such as $X_b$. Similarly, for each term of the form $X_b Y_{b'} Z_{b''}$ one would need to introduce yet another color to avoid conflicting with single-typed and double-typed terms. Alternatively, as long as one is consistent for all terms, if $X_b Y_{b'}$ appears and $G_1(b)$ is already included in the clausal theory graph, one might just adjoin an indicator vertex with a unique color to indicate that both single- and double-typed terms exist. See \cref{fig:compgadget}. 

\begin{figure}
    \centering
    $G_1(b)\cup G_2(b')\cup G_3(b'')$$\cup\left(\vcenter{\hbox{\includegraphics[width=25mm]{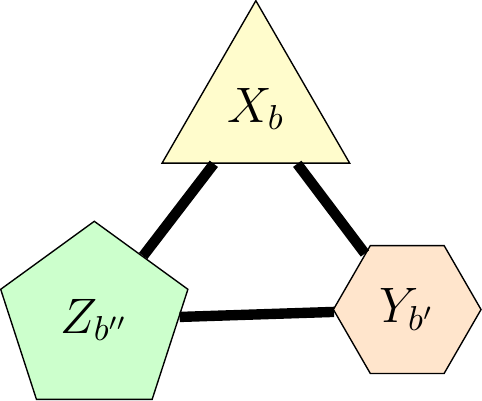}}}\right)$ 
    \caption{Construction of a composite gadget for a Hamiltonian term of the form $X_bY_{b'}Z_{b''}$. $G_1(b), G_2(b')$, and $G_3(b'')$ are as defined in the main text (see Table 1). }
    \label{fig:compgadget}
\end{figure}

Furthermore, as noted in the main text, one could imagine similar constructions that allow for the detection of even more complicated symmetries including those between ``non-like" Hamiltonian terms. While the current work presented here covers all cases previously described in the literature, it is of interest to consider these more general constructions. This combined with a full accounting for algorithms to deal with approximate symmetries would serve to fully generalize the work here. We leave details of such constructions to forthcoming work \cite{bringewattjarretinprep}.

\paragraph{\textbf{Optimality:}} The gadgets in this paper were chosen for simplicity, but are suboptimal from the perspective of GI. For instance, one could combine the gadgets for $XX$ ($G_1(11)$) and $YY$ ($G_2(11)$) into a re-colored version of $G_2(11)$. One can similarly include $ZZ$ terms in a gadget of this form. (See the $Y_{110}$ and $Z_{011}$ gadgets of \cref{fig:fullExample} and note that the structure of a potential $Z_{110}$ gadget would also be captured by $Y_{110}$ by deleting edges pointing towards literals $\pm Z_{\vec{0}}$ and $\pm Z_{\vec{1}}$.)

If one wishes to make the algorithm as efficient as possible, one would ideally choose such a minimalist construction.

\section{S3: Proofs}\label{app:proofs}
\begin{proof}[Proof of \cref{thm:M_inverse}]
    By construction $M:\hg \mapsto \ctg$ is unique, so we only need to show that the $M^{-1}$ exists. First, note that $M_0:V_\hg \longrightarrow M_0[V_\hg] \subset V_\ctg$ is bijective and provides a unique mapping from vertices to assignments of literals. Hence, we only need to show how to derive $w(u,v)$ from $\ctg$ for all $(u,v) \in V_\hg \times V_\hg$.
    
    For any $u \in V_\hg$, we calculate $w(u,\infty)$ by considering all shortest paths $P_b$ connecting $A(u)^{(\mathpzc{b})}$ to each clause cluster vertex $Z_b^{(\mathpzc{z}_b)}$ through $\ctg$. In particular we obtain  
    \begin{equation}
        w(u,\infty) = 2\sum_{b:\abs{P_b} \neq \hw*{b}}\abs{\kappa_b} - \sum_b \abs{\kappa_b}
    \end{equation}
    where we recall that we can extract $\abs{\kappa_b}$ from the color $\mathpzc{z}_b$ of each clause cluster vertex. 
    
    Similarly, we can consider any pair $(u,X_bu) \in V_\hg \times V_\hg$. We consider $\mathcal{M}(u,X_b u) = \left(L \cup -L\right)\setminus \left(M_0(u) \cap M_0(v)\right)$ and construct $S = \mathcal{M}(u,X_bu)\cup M_1(\{u,X_bu\})$. Let $P_b$ be the number of shortest paths connecting $A(u)^{(\mathpzc{b})}$ to $A(X_b u)^{(\mathpzc{b})}$ through $S$. Then, 
    
    \begin{equation}
    w(u,X_b u) = \begin{cases}
    \alpha_b + (-1)^{\frac{\abs*{P_b}}{\hw{b}^2}}\beta_b & \mathrm{if} G_1(b)\cup G_2(b) \subset \ctg \\
    0 & \mathrm{otherwise}
    \end{cases}
    \end{equation}
    
    Hence, $\exists\, M^{-1}:\ctg \mapsto \hg$ and $M$ is bijective.
\end{proof}

Recall $\ctg(u) = M_0(u) \cup M_1(E_\hg)$. We now provide a straightforward lemma that we use to prove \cref{thm:automorph}. In the below proofs we drop the ${(\mathpzc{b})}$ superscript on assignment vertices to avoid notational clutter.

\begin{lemma}\label{thm:assign_change}
There exists a color-preserving automorphism $g$ of $\ctg$ with $g \circ A(u) =  A(v)$ iff $\ctg(u)\simeq\ctg(v)$.
\end{lemma}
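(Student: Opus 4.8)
The plan is to prove both directions of the biconditional directly from the definitions of $\ctg(u) = M_0(u)\cup M_1(E_\hg)$ and the structure of a color-preserving automorphism.

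First I would handle the forward direction: suppose $g$ is a color-preserving automorphism of $\ctg$ with $g\circ A(u) = A(v)$. Since $M_1(E_\hg)$ is a fixed subgraph common to $\ctg(u)$ and $\ctg(v)$ (it contains all literal vertices $L\cup\neg L$ and all generator, weight-generator, and clause gadgetry), and since $g$ is an automorphism of all of $\ctg$, I would argue that $g$ restricts to a color-preserving map that sends the vertex $A(u)$ together with its incident star edges $\{\ell^{(\mathpzc{a})},A(u)\}_{\ell\in A(u)}$ onto $A(v)$ together with $\{\ell^{(\mathpzc{a})},A(v)\}_{\ell\in A(v)}$. This is exactly the statement that $g$ carries $M_0(u)\cup M_1(E_\hg)$ to $M_0(v)\cup M_1(E_\hg)$, i.e. $\ctg(u)\simeq \ctg(v)$ via $g$ (restricted appropriately, or just $g$ itself viewed on the relevant induced subgraph). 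The only subtlety is confirming that $A(u)$ is the \emph{unique} vertex of its color $\mathpzc{b}$ adjacent to the literal set $M_0(u)$, so that ``$g\circ A(u) = A(v)$'' genuinely pins down the image of the whole gadget $M_0(u)$; this follows because $M_0$ is bijective (stated in the text) and assignment vertices are only adjacent to literals.

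For the converse: suppose $h:\ctg(u)\xrightarrow{\sim}\ctg(v)$ is a color-preserving isomorphism. Because $\ctg(u)$ contains exactly one vertex of color $\mathpzc{b}$ (namely $A(u)$) and likewise $\ctg(v)$ contains exactly one such vertex $A(v)$, color-preservation forces $h(A(u)) = A(v)$. Now I would extend $h$ to an automorphism $g$ of the full graph $\ctg = M(\hg)$. The key observation is that $\ctg$ is the union of $M_1(E_\hg)$ (a fixed graph) with all the star gadgets $M_0(X_b)$, $b\in\{0,1\}^n$; the assignment vertices $A(X_b)$ are pairwise non-adjacent and each is determined by the literal set it attaches to. Since $h$ fixes $M_1(E_\hg)$ setwise (it must, as this is the common part and $h$ preserves colors and adjacency to literals), $h$ permutes the literals $L\cup\neg L$ in some color-preserving way $\sigma$, and $\sigma$ together with the induced relabeling of assignment vertices $A(X_b)\mapsto A(\sigma\cdot X_b)$ defines a bijection on $V_\ctg$; checking it preserves every gadget's edge set is routine because the gadgets $G_1,G_2,G_3$ are built symmetrically from literals. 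This $g$ is a color-preserving automorphism of $\ctg$ with $g\circ A(u) = A(v)$, as required.

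The main obstacle I anticipate is the converse direction, specifically the step of \emph{extending} the local isomorphism $\ctg(u)\simeq\ctg(v)$ to a genuine automorphism of all of $\ctg$: one must verify that the permutation of literals forced by $h$ really does lift to a symmetry respecting every $G_1(b)$, $G_2(b)$, and $G_3(b)$ gadget simultaneously, and not merely the single gadgets touching $A(u)$ and $A(v)$. This requires using that all gadgets are constructed uniformly from the literal vertices with colors depending only on Hamiltonian coefficients (so a literal-relabeling that is color-consistent automatically extends), which is essentially a bookkeeping argument over the gadget definitions in \cref{tab:gadgets}. I would also need to be careful that directed edges (in $G_2$ and the weight-generator stars) are preserved, which again follows from the symmetric $(\ell^{(\mathpzc{a})},u_b^{(\mathpzc{c})}),(u_b^{(\mathpzc{c})},-\ell^{(\mathpzc{a})})$ pattern being invariant under any literal-to-literal, negation-respecting bijection.
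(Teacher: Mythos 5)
Your proof is correct and follows essentially the same route as the paper's: the forward direction is the restriction of $g$ to $V_{\ctg(u)}$, and the converse uses the fact that $A(u)$, $A(v)$ are the unique $\mathpzc{b}$-colored vertices of $\ctg(u)$, $\ctg(v)$, so the isomorphism must send $A(u)\mapsto A(v)$ and hence map $M_1[E_\hg]$ onto itself, after which one extends to the remaining assignment vertices via the induced literal permutation. One caution on your converse: the parenthetical justification that gadget preservation is automatic ``because the gadgets are built symmetrically from literals, so a color-consistent literal relabeling automatically extends'' is false as a general claim --- a negation-respecting, color-consistent permutation of literals need not preserve $G_1$, $G_2$, $G_3$ (e.g.\ swapping a literal that appears in a clause gadget with one that does not). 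What actually makes the step free of bookkeeping --- and what the paper says --- is that $\ctg(u)$ and $\ctg(v)$ each contain \emph{all} of $M_1[E_\hg]$, so the restriction of your isomorphism to $M_1[E_\hg]$ is already a color-preserving automorphism of $M_1[E_\hg]$; your anticipated ``main obstacle'' of lifting from ``gadgets touching $A(u)$'' to all gadgets therefore does not arise, and the only new edges to check after extending to every $A(X_b)$ are the assignment-to-literal star edges, which are preserved by the definition of the extension. With that justification substituted for the parenthetical, your argument coincides with the paper's proof.
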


\begin{proof}
    When $u = v$, this is trivial. We now assume $u \neq v$.

    Suppose $g$ is an automorphism of $\ctg$ with $g \circ A(u) = A(v)$. Then, clearly, $\restr{g}{V_{\ctg(u)}}:V_{\ctg(u)}\longrightarrow V_{\ctg(v)}$ is a color-preserving isomorphism between $\ctg(u)$ and $\ctg(v)$.
    
    Conversely, suppose there exists a color-preserving isomorphism $\restr{g}{V_{\ctg(u)}}:V_{\ctg(u)}\longrightarrow V_{\ctg(v)}$ such that $\ctg(u)\simeq\ctg(v)$. We note that by construction $\ctg(u) \cap \ctg(v) = M_1(E_\hg)$ and color-consistency requires $\restr{g}{V_{\ctg(u)}}\circ A(u)=A(v)$. Therefore, $\restr{g}{V_{\ctg(u)}}$ is an automorphism of $M_1(E_\hg)$. Extend $\restr{g}{V_{\ctg(u)}}$ to $g:V_\ctg \longrightarrow V_\ctg$ such that $g\circ A(X_b) = \set*{g\left((-1)^{b_i}Z_{\vec{i}}^{(\mathpzc{a})}\right)}_{i=0}^{n-1} = A(X_{b'})$. Then, $g$ is an isomorphism between $M_0(X_b)\cup M_1(E_\hg)$ and $M_0(X_{b'})\cup M_1(E_\hg)$ for any choice of $b$. Hence, $g$ is an automorphism of $\ctg$.
   
\end{proof}

Now, we can prove \cref{thm:automorph}.
\begin{proof}[Proof of \cref{thm:automorph}]
    Suppose $u \equiv f(u) \in V_\hg$. By definition, $f\in \aut(\hg)$. By construction, we have that $g$ is a color-preserving isomorphism $\ctg(u) \simeq \ctg(f(u))$.
    
    Now, suppose that $\restr{g}{V_{\ctg(u)}}$ is a color-preserving isomorphism $\ctg(u) \simeq \ctg(v)$.  Then, by \cref{thm:assign_change}, there exists some color-preserving automorphism $g:V_{\ctg} \longrightarrow V_{\ctg}$ of $\ctg$. Define $f=A^{-1}\circ g \circ A$. Then, by \cref{thm:M_inverse}, $f$ is an automorphism of $\hg$ and, thus, $u \equiv f(u)$. 
\end{proof}

\section{S4: Smooth Transitions}\label{app:transitions}
We apply the result of \cite{jarret2018hamiltonian} under very weak constraints to show that families of Hamiltonians almost invariably encounter an exponentially small gap, unless they undergo very smooth phase transitions. These results are similar to but, in terms of gap-analysis, stronger than those of \cite{farhi2008make}. Here, we reference only the behavior of the ground state and show that most phase transitions, like those we expect out of adiabatic optimization, produce exponentially small gaps. The following simple theorem is sufficiently illustrative, although its statement could be improved asymptotically and easily generalized to include more than $k$-local Hamiltonians.

\renewcommand{\thethm}{S.\arabic{thm}}
\begin{thm}\label{thm:smooth}
 If $H$ is a $k$-local Hamiltonian with ground state $\phi$, $\norm{H}\leq 1$, and there exists a set $S_0$ such that
 \begin{enumerate}
     \item $S_0 = \set*{u \;\vert\; \abs{\phi(u)} < 2^{-n^c}}$ with absolute constant $c>0$,
     \item $\sum_{u \in S_0} \abs*{\phi(u)}^2,\sum_{u \notin S_0 } \abs*{\phi(u)}^2 = \Omega\left(\frac{1}{\poly(n)}\right)$,
     \item and $\abs*{\overline{S_0}} = \bigO{\poly(n)}$;
 \end{enumerate} then $\gamma(H) = 2^{-\Omega(n^c)}$.
\end{thm}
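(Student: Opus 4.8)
The plan is to prove \cref{thm:smooth} as a Cheeger-type upper bound on $\gamma(H)$, essentially the ``easy'' (variational) direction of a conductance bound of the kind developed in \cite{jarret2018hamiltonian}. Conditions (1)--(3) say precisely that the ground state $\phi$ has a bottleneck: the ``small'' side $\overline{S_0}$ consists of only $\bigO{\poly(n)}$ computational basis states yet carries an $\Omega(1/\poly(n))$ fraction of the mass $\sum_u\abs{\phi(u)}^2$; the ``large'' side $S_0$ carries the complementary $\Omega(1/\poly(n))$ fraction; and every amplitude on $S_0$ is below $2^{-n^c}$. I would turn this bottleneck into a trial state $\psi$ orthogonal to $\phi$ whose Rayleigh quotient relative to $\lambda_0$ is exponentially small, and then invoke the variational bound $\gamma(H)\le\langle\psi,(H-\lambda_0)\psi\rangle/\norm{\psi}^2$, valid for any $\psi\perp\phi$ (if $\lambda_0$ is degenerate, $\gamma(H)=0$ and there is nothing to prove).

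Concretely, set $\mu_0=\sum_{u\in S_0}\abs{\phi(u)}^2$ and $\mu_1=\sum_{u\notin S_0}\abs{\phi(u)}^2$, so $\mu_0+\mu_1=1$ and, by~(2), $\mu_0,\mu_1=\Omega(1/\poly(n))$. Let $f$ be the real diagonal function taking the value $\mu_1$ on $S_0$ and $-\mu_0$ on $\overline{S_0}$, and set $\psi(u)=f(u)\phi(u)$. Then $\langle\phi,\psi\rangle=\mu_1\mu_0-\mu_0\mu_1=0$ and $\norm{\psi}^2=\mu_1^2\mu_0+\mu_0^2\mu_1=\mu_0\mu_1=\Omega(1/\poly(n))$. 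The key input is the ``ground-state transform'' (Dirichlet-form) identity, valid for any Hermitian $H$ with $H\phi=\lambda_0\phi$, $\lambda_0\in\mathbb{R}$, and any real diagonal $f$:
\[
\langle\psi,(H-\lambda_0)\psi\rangle=-\tfrac12\sum_{u\neq v}\bigl(f(u)-f(v)\bigr)^2\,\overline{\phi(u)}\,\langle u|H|v\rangle\,\phi(v),
\]
so that, by the triangle inequality, $\langle\psi,(H-\lambda_0)\psi\rangle\le\tfrac12\sum_{u\neq v}(f(u)-f(v))^2\abs{\langle u|H|v\rangle}\abs{\phi(u)}\abs{\phi(v)}$. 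Only off-diagonal pairs $(u,v)$ straddling the cut $S_0\mid\overline{S_0}$ contribute, and each such term has $(f(u)-f(v))^2=(\mu_0+\mu_1)^2=1$; one endpoint lies in $S_0$, so has amplitude below $2^{-n^c}$; the other has amplitude at most $1$; every entry obeys $\abs{\langle u|H|v\rangle}\le\norm{H}\le1$; and, by $k$-locality, each row of $H$ has at most $\sum_{j=0}^{k}\binom{n}{j}=\bigO{n^k}$ nonzero off-diagonal entries (one per edge generator $X_b$ with $\hw{b}\le k$), so the number of straddling pairs is at most $2\abs{\overline{S_0}}\cdot\bigO{n^k}=\poly(n)$ by~(3). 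Hence $\langle\psi,(H-\lambda_0)\psi\rangle\le\poly(n)\,2^{-n^c}$.

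Combining the two estimates gives $\gamma(H)\le\poly(n)\,2^{-n^c}/(\mu_0\mu_1)\le\poly(n)\,2^{-n^c}=2^{-\Omega(n^c)}$, as claimed; nothing uses stoquasticity, so the argument applies to general $k$-local Hermitian $H$, and tracking constants yields $\gamma(H)\le 2^{-n^c+\bigO{\log n}}$. I do not expect a genuine obstacle — this is a short variational computation — but two points deserve care: (i) verifying the Dirichlet-form identity for a possibly complex ground state $\phi$, which uses only the Hermiticity of $H$ and the reality of $\lambda_0$ (one symmetrizes $\lambda_0\sum_u f(u)^2\abs{\phi(u)}^2=\sum_{u,v}f(u)^2\overline{\phi(u)}\langle u|H|v\rangle\phi(v)$ in $u\leftrightarrow v$ and subtracts); and (ii) bounding the number of cut edges by $\poly(n)$, which is exactly where $k$-locality enters. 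Conditions~(2) and~(3) are precisely what keeps the denominator $\mu_0\mu_1$ and the cut size polynomial, while condition~(1) makes the numerator exponentially small.
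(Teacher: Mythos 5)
Your argument is correct, and it bounds the gap by exactly the quantity the paper uses --- the ground-state-weighted conductance of the cut between $S_0$ and $\overline{S_0}$ --- but it gets there by a different route. The paper's proof invokes the weighted Cheeger inequality $\gamma(H)\le 2h \le 2h_{S_0}$ of \cite{jarret2018hamiltonian} as a black box and then simply evaluates the Cheeger ratio $h_{S_0}$: the numerator has only $\poly(n)$ cut terms (by $k$-locality together with $\abs*{\overline{S_0}}=\bigO{\poly(n)}$), each suppressed by $\max_{v\in S_0}\abs{\phi(v)}<2^{-n^c}$, while condition (2) keeps the denominator $\min\{\mu_0,\mu_1\}$ polynomially large. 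You instead reprove the needed (``easy'') direction of that inequality from scratch: the two-valued multiplier $f$, the orthogonality and norm computations, and the ground-state-transform identity constitute the standard variational derivation of the Cheeger-type upper bound, and your cut bookkeeping (one endpoint in $S_0$ hence amplitude below $2^{-n^c}$, at most $2\abs*{\overline{S_0}}\cdot\bigO{n^k}$ straddling pairs from the $\bigO{n^k}$ possible off-diagonal displacements, $\abs{H_{uv}}\le\norm{H}\le 1$) matches the paper's estimate term for term; the only cosmetic difference is your denominator $\mu_0\mu_1$ versus the paper's $\min\{\mu_0,\mu_1\}$, which is immaterial since both are $\Omega(1/\poly(n))$. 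What your version buys is self-containedness (no external Cheeger machinery), an explicit constant $\gamma(H)\le 2^{-n^c+\bigO{\log n}}$, and the observation that only Hermiticity of $H$ and reality of $\lambda_0$ are used, so stoquasticity and positivity of $\phi$ play no role; what the paper's version buys is brevity and consistency with the weighted-Cheeger framework of \cite{jarret2018hamiltonian} that it already uses elsewhere. Your two flagged points of care (the symmetrization proving the Dirichlet-form identity for complex $\phi$, and the $\poly(n)$ count of cut edges from $k$-locality) are handled correctly, and the degenerate-ground-space case is rightly dismissed as trivial.
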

\begin{proof}
    Note that by \cite{jarret2018hamiltonian}, the weighted Cheeger constant $h$ bounds $\gamma(H)$, as $2h \geq \gamma(H)$. In particular, $h = \min_{S} h_S$ where $h_{S}$ is the weighted Cheeger ratio
    \begin{align*}
        h_S &= \frac{\sum_{u \in S, v \notin S}\Re\left(-H_{uv}\phi(u)\phi(v)\right)}{\min\left\{\sum_{u \in S}\abs*{\phi(u)}^2,\sum_{u \notin S}\abs*{\phi(u)}^2 \right\}}. 
    \end{align*}
    Now, consider $S_0$,
    \begin{align*}
        h_{S_0} &\leq \bigO{\frac{\max_{v_0 \in S_0}\abs*{\phi(v_0)} \sum_{u \in S_0, v \notin S_0}\abs*{-H_{uv}}\abs*{\phi(u)}}{\min\left\{\sum_{u \in S_0}\abs*{\phi(u)}^2,\sum_{u \notin S_0}\abs*{\phi(u)}^2 \right\}}} \\
        &= \bigO{\frac{\poly(n)\max_{v \in S_0}\abs*{\phi(v)}}{\min\left\{\sum_{u \in S_0}\abs*{\phi(u)}^2,\sum_{u \notin S_0}\abs*{\phi(u)}^2 \right\}}} \\
        &= \bigO{\frac{\poly(n)2^{-n^c}}{\min\left\{\sum_{u \in S_0}\abs*{\phi(u)}^2,\sum_{u \notin S_0}\abs*{\phi(u)}^2 \right\}}} \\
        &= 2^{-\Omega(n^c)}.
    \end{align*}
    Since the weighted Cheeger constant $h = \min_{S} h_{S} < h_{S_0} = 2^{-\Omega(n^c)}$, $\gamma(H) = 2^{-\Omega(n^c)}$.
\end{proof}
Thus, if we are interpolating over a family of Hamiltonians $H(s)$ and we ever encounter a ground state $\phi$ such that (1) there exists a set $S$ where we have substantial probability of returning a sample from either $S$ or $\overline{S}$, (2) for any $u \in S$ it is unlikely that we will sample $u$ in time $\bigO{\poly(n)}$, and (3) $\overline{S}$ is small, we encounter an exponentially small gap. Since we typically interpolate over a family of Hamiltonians $H(s)$ such that the ground state $\phi_0$ of $H(0)$ has $\abs*{\phi_0(u)}^2 = \bigO{2^{-n/2}}$ for any $u$ and end in a Hamiltonian $H(1)$ such that the ground state $\phi_1$ satisfies $\sum_{u \in S}\abs*{\phi_1(u)}^2 = \Omega\left(1/\poly(n)\right)$ for some small set of computational basis states $\abs*{S} = \bigO{\poly(n)}$, avoiding the constraints of \cref{thm:smooth} with naive families $H(s)$ is unlikely.


\begin{thebibliography}{37}%
\makeatletter
\providecommand \@ifxundefined [1]{%
 \@ifx{#1\undefined}
}%
\providecommand \@ifnum [1]{%
 \ifnum #1\expandafter \@firstoftwo
 \else \expandafter \@secondoftwo
 \fi
}%
\providecommand \@ifx [1]{%
 \ifx #1\expandafter \@firstoftwo
 \else \expandafter \@secondoftwo
 \fi
}%
\providecommand \natexlab [1]{#1}%
\providecommand \enquote  [1]{``#1''}%
\providecommand \bibnamefont  [1]{#1}%
\providecommand \bibfnamefont [1]{#1}%
\providecommand \citenamefont [1]{#1}%
\providecommand \href@noop [0]{\@secondoftwo}%
\providecommand \href [0]{\begingroup \@sanitize@url \@href}%
\providecommand \@href[1]{\@@startlink{#1}\@@href}%
\providecommand \@@href[1]{\endgroup#1\@@endlink}%
\providecommand \@sanitize@url [0]{\catcode `\\12\catcode `\$12\catcode
  `\&12\catcode `\#12\catcode `\^12\catcode `\_12\catcode `\%12\relax}%
\providecommand \@@startlink[1]{}%
\providecommand \@@endlink[0]{}%
\providecommand \url  [0]{\begingroup\@sanitize@url \@url }%
\providecommand \@url [1]{\endgroup\@href {#1}{\urlprefix }}%
\providecommand \urlprefix  [0]{URL }%
\providecommand \Eprint [0]{\href }%
\providecommand \doibase [0]{http://dx.doi.org/}%
\providecommand \selectlanguage [0]{\@gobble}%
\providecommand \bibinfo  [0]{\@secondoftwo}%
\providecommand \bibfield  [0]{\@secondoftwo}%
\providecommand \translation [1]{[#1]}%
\providecommand \BibitemOpen [0]{}%
\providecommand \bibitemStop [0]{}%
\providecommand \bibitemNoStop [0]{.\EOS\space}%
\providecommand \EOS [0]{\spacefactor3000\relax}%
\providecommand \BibitemShut  [1]{\csname bibitem#1\endcsname}%
\let\auto@bib@innerbib\@empty
\bibitem [{\citenamefont {Babai}(2016)}]{babai2016graph}%
  \BibitemOpen
  \bibfield  {author} {\bibinfo {author} {\bibfnamefont {L{\'a}szl{\'o}}\
  \bibnamefont {Babai}},\ }\bibfield  {title} {\enquote {\bibinfo {title}
  {Graph isomorphism in quasipolynomial time},}\ }in\ \href
  {https://doi.org/10.1145/2897518.2897542} {\emph {\bibinfo {booktitle}
  {Proceedings of the forty-eighth annual ACM symposium on Theory of
  Computing}}}\ (\bibinfo {year} {2016})\ pp.\ \bibinfo {pages}
  {684--697}\BibitemShut {NoStop}%
\bibitem [{\citenamefont {Bravyi}\ \emph
  {et~al.}(2008{\natexlab{a}})\citenamefont {Bravyi}, \citenamefont
  {Divincenzo}, \citenamefont {Oliveira},\ and\ \citenamefont
  {Terhal}}]{bravyi2008complexity}%
  \BibitemOpen
  \bibfield  {author} {\bibinfo {author} {\bibfnamefont {Sergey}\ \bibnamefont
  {Bravyi}}, \bibinfo {author} {\bibfnamefont {David~P}\ \bibnamefont
  {Divincenzo}}, \bibinfo {author} {\bibfnamefont {Roberto}\ \bibnamefont
  {Oliveira}}, \ and\ \bibinfo {author} {\bibfnamefont {Barbara~M}\
  \bibnamefont {Terhal}},\ }\bibfield  {title} {\enquote {\bibinfo {title} {The
  complexity of stoquastic local hamiltonian problems},}\ }\href
  {https://dl.acm.org/doi/abs/10.5555/2011772.2011773} {\bibfield  {journal}
  {\bibinfo  {journal} {Quantum Info. Comput.}\ }\textbf {\bibinfo {volume}
  {8}},\ \bibinfo {pages} {361--385} (\bibinfo {year}
  {2008}{\natexlab{a}})}\BibitemShut {NoStop}%
\bibitem [{\citenamefont {Aharonov}\ \emph {et~al.}(2008)\citenamefont
  {Aharonov}, \citenamefont {Van~Dam}, \citenamefont {Kempe}, \citenamefont
  {Landau}, \citenamefont {Lloyd},\ and\ \citenamefont
  {Regev}}]{aharonov2008adiabatic}%
  \BibitemOpen
  \bibfield  {author} {\bibinfo {author} {\bibfnamefont {Dorit}\ \bibnamefont
  {Aharonov}}, \bibinfo {author} {\bibfnamefont {Wim}\ \bibnamefont {Van~Dam}},
  \bibinfo {author} {\bibfnamefont {Julia}\ \bibnamefont {Kempe}}, \bibinfo
  {author} {\bibfnamefont {Zeph}\ \bibnamefont {Landau}}, \bibinfo {author}
  {\bibfnamefont {Seth}\ \bibnamefont {Lloyd}}, \ and\ \bibinfo {author}
  {\bibfnamefont {Oded}\ \bibnamefont {Regev}},\ }\bibfield  {title} {\enquote
  {\bibinfo {title} {Adiabatic quantum computation is equivalent to standard
  quantum computation},}\ }\href {https://doi.org/10.1137/080734479} {\bibfield
   {journal} {\bibinfo  {journal} {SIAM review}\ }\textbf {\bibinfo {volume}
  {50}},\ \bibinfo {pages} {755--787} (\bibinfo {year} {2008})}\BibitemShut
  {NoStop}%
\bibitem [{\citenamefont {Hastings}(2013)}]{Hastings}%
  \BibitemOpen
  \bibfield  {author} {\bibinfo {author} {\bibfnamefont {M.~B.}\ \bibnamefont
  {Hastings}},\ }\bibfield  {title} {\enquote {\bibinfo {title} {Obstructions
  to classically simulating the quantum adiabatic algorithm},}\ }\href
  {https://dl.acm.org/doi/10.5555/2535639.2535647} {\bibfield  {journal}
  {\bibinfo  {journal} {Quantum Info. Comput.}\ }\textbf {\bibinfo {volume}
  {13}},\ \bibinfo {pages} {1038--1076} (\bibinfo {year} {2013})},\ \bibinfo
  {note} {with appendix by M. H. Freedman}\BibitemShut {NoStop}%
\bibitem [{\citenamefont {Jarret}\ \emph {et~al.}(2016)\citenamefont {Jarret},
  \citenamefont {Jordan},\ and\ \citenamefont {Lackey}}]{jarret2016adiabatic}%
  \BibitemOpen
  \bibfield  {author} {\bibinfo {author} {\bibfnamefont {Michael}\ \bibnamefont
  {Jarret}}, \bibinfo {author} {\bibfnamefont {Stephen~P}\ \bibnamefont
  {Jordan}}, \ and\ \bibinfo {author} {\bibfnamefont {Brad}\ \bibnamefont
  {Lackey}},\ }\bibfield  {title} {\enquote {\bibinfo {title} {Adiabatic
  optimization versus diffusion monte carlo methods},}\ }\href
  {https://doi.org/10.1103/PhysRevA.94.042318} {\bibfield  {journal} {\bibinfo
  {journal} {Phys. Rev. A}\ }\textbf {\bibinfo {volume} {94}},\ \bibinfo
  {pages} {042318} (\bibinfo {year} {2016})}\BibitemShut {NoStop}%
\bibitem [{\citenamefont {Bringewatt}\ \emph {et~al.}(2018)\citenamefont
  {Bringewatt}, \citenamefont {Dorland}, \citenamefont {Jordan},\ and\
  \citenamefont {Mink}}]{bringewatt2018diffusion}%
  \BibitemOpen
  \bibfield  {author} {\bibinfo {author} {\bibfnamefont {Jacob}\ \bibnamefont
  {Bringewatt}}, \bibinfo {author} {\bibfnamefont {William}\ \bibnamefont
  {Dorland}}, \bibinfo {author} {\bibfnamefont {Stephen~P}\ \bibnamefont
  {Jordan}}, \ and\ \bibinfo {author} {\bibfnamefont {Alan}\ \bibnamefont
  {Mink}},\ }\bibfield  {title} {\enquote {\bibinfo {title} {Diffusion monte
  carlo approach versus adiabatic computation for local hamiltonians},}\ }\href
  {https://doi.org/10.1103/PhysRevA.97.022323} {\bibfield  {journal} {\bibinfo
  {journal} {Physical Review A}\ }\textbf {\bibinfo {volume} {97}},\ \bibinfo
  {pages} {022323} (\bibinfo {year} {2018})}\BibitemShut {NoStop}%
\bibitem [{\citenamefont {Hastings}(2020)}]{hastings2020power}%
  \BibitemOpen
  \bibfield  {author} {\bibinfo {author} {\bibfnamefont {Matthew~B}\
  \bibnamefont {Hastings}},\ }\bibfield  {title} {\enquote {\bibinfo {title}
  {The power of adiabatic quantum computation with no sign problem},}\ }\href
  {https://arxiv.org/abs/2005.03791} {\bibfield  {journal} {\bibinfo  {journal}
  {arXiv preprint arXiv:2005.03791}\ } (\bibinfo {year} {2020})}\BibitemShut
  {NoStop}%
\bibitem [{\citenamefont {Jansen}\ \emph {et~al.}(2007)\citenamefont {Jansen},
  \citenamefont {Ruskai},\ and\ \citenamefont {Seiler}}]{jansen2007bounds}%
  \BibitemOpen
  \bibfield  {author} {\bibinfo {author} {\bibfnamefont {Sabine}\ \bibnamefont
  {Jansen}}, \bibinfo {author} {\bibfnamefont {Mary-Beth}\ \bibnamefont
  {Ruskai}}, \ and\ \bibinfo {author} {\bibfnamefont {Ruedi}\ \bibnamefont
  {Seiler}},\ }\bibfield  {title} {\enquote {\bibinfo {title} {{Bounds for the
  adiabatic approximation with applications to quantum computation}},}\ }\href
  {https://doi.org/10.1063/1.2798382} {\bibfield  {journal} {\bibinfo
  {journal} {J. Math. Phys.}\ }\textbf {\bibinfo {volume} {48}},\ \bibinfo
  {pages} {102111} (\bibinfo {year} {2007})}\BibitemShut {NoStop}%
\bibitem [{\citenamefont {Elgart}\ and\ \citenamefont
  {Hagedorn}(2012)}]{elgart2012note}%
  \BibitemOpen
  \bibfield  {author} {\bibinfo {author} {\bibfnamefont {Alexander}\
  \bibnamefont {Elgart}}\ and\ \bibinfo {author} {\bibfnamefont {George~A}\
  \bibnamefont {Hagedorn}},\ }\bibfield  {title} {\enquote {\bibinfo {title}
  {{A note on the switching adiabatic theorem}},}\ }\href
  {https://doi.org/10.1063/1.4748968} {\bibfield  {journal} {\bibinfo
  {journal} {J. Math. Phys.}\ }\textbf {\bibinfo {volume} {53}},\ \bibinfo
  {pages} {102202} (\bibinfo {year} {2012})}\BibitemShut {NoStop}%
\bibitem [{\citenamefont {Bravyi}\ and\ \citenamefont
  {Gosset}(2017)}]{bravyi2017polynomial}%
  \BibitemOpen
  \bibfield  {author} {\bibinfo {author} {\bibfnamefont {Sergey}\ \bibnamefont
  {Bravyi}}\ and\ \bibinfo {author} {\bibfnamefont {David}\ \bibnamefont
  {Gosset}},\ }\bibfield  {title} {\enquote {\bibinfo {title} {Polynomial-time
  classical simulation of quantum ferromagnets},}\ }\href
  {https://doi.org/10.1103/PhysRevLett.119.100503} {\bibfield  {journal}
  {\bibinfo  {journal} {Phys. Rev. Lett.}\ }\textbf {\bibinfo {volume} {119}},\
  \bibinfo {pages} {100503} (\bibinfo {year} {2017})}\BibitemShut {NoStop}%
\bibitem [{\citenamefont {Bravyi}\ \emph
  {et~al.}(2008{\natexlab{b}})\citenamefont {Bravyi}, \citenamefont
  {DiVincenzo}, \citenamefont {Loss},\ and\ \citenamefont
  {Terhal}}]{bravyi2008quantum}%
  \BibitemOpen
  \bibfield  {author} {\bibinfo {author} {\bibfnamefont {Sergey}\ \bibnamefont
  {Bravyi}}, \bibinfo {author} {\bibfnamefont {David~P}\ \bibnamefont
  {DiVincenzo}}, \bibinfo {author} {\bibfnamefont {Daniel}\ \bibnamefont
  {Loss}}, \ and\ \bibinfo {author} {\bibfnamefont {Barbara~M}\ \bibnamefont
  {Terhal}},\ }\bibfield  {title} {\enquote {\bibinfo {title} {Quantum
  simulation of many-body hamiltonians using perturbation theory with
  bounded-strength interactions},}\ }\href
  {https://doi.org/10.1103/PhysRevLett.101.070503} {\bibfield  {journal}
  {\bibinfo  {journal} {Phys. Rev. Lett.}\ }\textbf {\bibinfo {volume} {101}},\
  \bibinfo {pages} {070503} (\bibinfo {year} {2008}{\natexlab{b}})}\BibitemShut
  {NoStop}%
\bibitem [{\citenamefont {Crosson}\ and\ \citenamefont
  {Harrow}(2018)}]{crosson2018rapid}%
  \BibitemOpen
  \bibfield  {author} {\bibinfo {author} {\bibfnamefont {Elizabeth}\
  \bibnamefont {Crosson}}\ and\ \bibinfo {author} {\bibfnamefont {Aram~W}\
  \bibnamefont {Harrow}},\ }\bibfield  {title} {\enquote {\bibinfo {title}
  {Rapid mixing of path integral monte carlo for 1d stoquastic hamiltonians},}\
  }\href {https://arxiv.org/abs/1812.02144} {\bibfield  {journal} {\bibinfo
  {journal} {arXiv preprint arXiv:1812.02144}\ } (\bibinfo {year}
  {2018})}\BibitemShut {NoStop}%
\bibitem [{\citenamefont {Crosson}\ and\ \citenamefont
  {Slezak}(2020)}]{crosson2020classical}%
  \BibitemOpen
  \bibfield  {author} {\bibinfo {author} {\bibfnamefont {Elizabeth}\
  \bibnamefont {Crosson}}\ and\ \bibinfo {author} {\bibfnamefont {Samuel}\
  \bibnamefont {Slezak}},\ }\href {https://arxiv.org/abs/2002.02232} {\enquote
  {\bibinfo {title} {Classical simulation of high temperature quantum ising
  models},}\ } (\bibinfo {year} {2020})\BibitemShut {NoStop}%
\bibitem [{\citenamefont {Farhi}\ \emph {et~al.}(2002)\citenamefont {Farhi},
  \citenamefont {Goldstone},\ and\ \citenamefont {Gutmann}}]{farhi2002quantum}%
  \BibitemOpen
  \bibfield  {author} {\bibinfo {author} {\bibfnamefont {Edward}\ \bibnamefont
  {Farhi}}, \bibinfo {author} {\bibfnamefont {Jeffrey}\ \bibnamefont
  {Goldstone}}, \ and\ \bibinfo {author} {\bibfnamefont {Sam}\ \bibnamefont
  {Gutmann}},\ }\bibfield  {title} {\enquote {\bibinfo {title} {Quantum
  adiabatic evolution algorithms versus simulated annealing},}\ }\href
  {https://arxiv.org/abs/quant-ph/0201031} {\bibfield  {journal} {\bibinfo
  {journal} {arXiv preprint quant-ph/0201031}\ } (\bibinfo {year}
  {2002})}\BibitemShut {NoStop}%
\bibitem [{\citenamefont {Brady}\ and\ \citenamefont {van
  Dam}(2016)}]{brady2016spectral}%
  \BibitemOpen
  \bibfield  {author} {\bibinfo {author} {\bibfnamefont {Lucas~T}\ \bibnamefont
  {Brady}}\ and\ \bibinfo {author} {\bibfnamefont {Wim}\ \bibnamefont {van
  Dam}},\ }\bibfield  {title} {\enquote {\bibinfo {title} {{Spectral-gap
  analysis for efficient tunneling in quantum adiabatic optimization}},}\
  }\href {https://doi.org/10.1103/PhysRevA.94.032309} {\bibfield  {journal}
  {\bibinfo  {journal} {Phys. Rev. A}\ }\textbf {\bibinfo {volume} {94}},\
  \bibinfo {pages} {032309} (\bibinfo {year} {2016})}\BibitemShut {NoStop}%
\bibitem [{\citenamefont {Reichardt}(2004)}]{R04}%
  \BibitemOpen
  \bibfield  {author} {\bibinfo {author} {\bibfnamefont {Ben~W.}\ \bibnamefont
  {Reichardt}},\ }\bibfield  {title} {\enquote {\bibinfo {title} {The quantum
  adiabatic optimization algorithm and local minima},}\ }in\ \href
  {https://doi.org/10.1145/1007352.1007428} {\emph {\bibinfo {booktitle}
  {Proceedings of the 36th annual ACM Symposium on Theory of Computing
  (STOC)}}}\ (\bibinfo {year} {2004})\ pp.\ \bibinfo {pages}
  {502--510}\BibitemShut {NoStop}%
\bibitem [{\citenamefont {Crosson}\ and\ \citenamefont {Deng}(2014)}]{CD14}%
  \BibitemOpen
  \bibfield  {author} {\bibinfo {author} {\bibfnamefont {Elizabeth}\
  \bibnamefont {Crosson}}\ and\ \bibinfo {author} {\bibfnamefont {Mingkai}\
  \bibnamefont {Deng}},\ }\bibfield  {title} {\enquote {\bibinfo {title}
  {Tunneling through high energy barriers in simulated quantum annealing},}\
  }\href {https://arxiv.org/abs/1410.8484} {\bibfield  {journal} {\bibinfo
  {journal} {arXiv:1410.8484}\ } (\bibinfo {year} {2014})}\BibitemShut
  {NoStop}%
\bibitem [{\citenamefont {Muthukrishnan}\ \emph {et~al.}(2016)\citenamefont
  {Muthukrishnan}, \citenamefont {Albash},\ and\ \citenamefont
  {Lidar}}]{MAL16}%
  \BibitemOpen
  \bibfield  {author} {\bibinfo {author} {\bibfnamefont {Siddharth}\
  \bibnamefont {Muthukrishnan}}, \bibinfo {author} {\bibfnamefont {Tameem}\
  \bibnamefont {Albash}}, \ and\ \bibinfo {author} {\bibfnamefont {Daniel}\
  \bibnamefont {Lidar}},\ }\bibfield  {title} {\enquote {\bibinfo {title}
  {Tunneling and speedup in quantum optimization for permutation-symmetric
  problems},}\ }\href {https://doi.org/10.1103/PhysRevX.6.031010} {\bibfield
  {journal} {\bibinfo  {journal} {Phys. Rev. X}\ }\textbf {\bibinfo {volume}
  {6}},\ \bibinfo {pages} {031010} (\bibinfo {year} {2016})}\BibitemShut
  {NoStop}%
\bibitem [{\citenamefont {Bringewatt}\ \emph {et~al.}(2019)\citenamefont
  {Bringewatt}, \citenamefont {Dorland},\ and\ \citenamefont
  {Jordan}}]{bringewatt2019polynomial}%
  \BibitemOpen
  \bibfield  {author} {\bibinfo {author} {\bibfnamefont {Jacob}\ \bibnamefont
  {Bringewatt}}, \bibinfo {author} {\bibfnamefont {William}\ \bibnamefont
  {Dorland}}, \ and\ \bibinfo {author} {\bibfnamefont {Stephen~P.}\
  \bibnamefont {Jordan}},\ }\bibfield  {title} {\enquote {\bibinfo {title}
  {Polynomial time algorithms for estimating spectra of adiabatic
  hamiltonians},}\ }\href {\doibase 10.1103/PhysRevA.100.032336} {\bibfield
  {journal} {\bibinfo  {journal} {Phys. Rev. A}\ }\textbf {\bibinfo {volume}
  {100}},\ \bibinfo {pages} {032336} (\bibinfo {year} {2019})}\BibitemShut
  {NoStop}%
\bibitem [{Note1()}]{Note1}%
  \BibitemOpen
  \bibinfo {note} {See \cite {bringewattjarretinprep} and the supplementary
  material for further discussion.}\BibitemShut {Stop}%
\bibitem [{\citenamefont {Helfgott}\ \emph {et~al.}(2017)\citenamefont
  {Helfgott}, \citenamefont {Bajpai},\ and\ \citenamefont
  {Dona}}]{helfgott2017graph}%
  \BibitemOpen
  \bibfield  {author} {\bibinfo {author} {\bibfnamefont {Harald~Andr{\'e}s}\
  \bibnamefont {Helfgott}}, \bibinfo {author} {\bibfnamefont {Jitendra}\
  \bibnamefont {Bajpai}}, \ and\ \bibinfo {author} {\bibfnamefont {Daniele}\
  \bibnamefont {Dona}},\ }\bibfield  {title} {\enquote {\bibinfo {title} {Graph
  isomorphisms in quasi-polynomial time},}\ }\href
  {https://arxiv.org/abs/1710.04574} {\bibfield  {journal} {\bibinfo  {journal}
  {arXiv preprint arXiv:1710.04574}\ } (\bibinfo {year} {2017})}\BibitemShut
  {NoStop}%
\bibitem [{\citenamefont {Horn}\ and\ \citenamefont
  {Johnson}(1990)}]{horn1990}%
  \BibitemOpen
  \bibfield  {author} {\bibinfo {author} {\bibfnamefont {Roger~A.}\
  \bibnamefont {Horn}}\ and\ \bibinfo {author} {\bibfnamefont {Charles~R.}\
  \bibnamefont {Johnson}},\ }\href@noop {} {\emph {\bibinfo {title} {Matrix
  Analysis}}}\ (\bibinfo  {publisher} {Cambridge University Press},\ \bibinfo
  {year} {1990})\BibitemShut {NoStop}%
\bibitem [{Note2()}]{Note2}%
  \BibitemOpen
  \bibinfo {note} {Actually, discrete nodal domain theorems are generally more
  useful here, but they too depend upon Perron-Frobenius theorem \cite
  {BrianDavies}.}\BibitemShut {Stop}%
\bibitem [{\citenamefont {Jarret}(2018)}]{jarret2018hamiltonian}%
  \BibitemOpen
  \bibfield  {author} {\bibinfo {author} {\bibfnamefont {Michael}\ \bibnamefont
  {Jarret}},\ }\bibfield  {title} {\enquote {\bibinfo {title} {Hamiltonian
  surgery: Cheeger-type gap inequalities for nonpositive (stoquastic), real,
  and hermitian matrices},}\ }\href {https://arxiv.org/abs/1804.06857}
  {\bibfield  {journal} {\bibinfo  {journal} {arXiv preprint arXiv:1804.06857}\
  } (\bibinfo {year} {2018})}\BibitemShut {NoStop}%
\bibitem [{Note3()}]{Note3}%
  \BibitemOpen
  \bibinfo {note} {\label {fn:combgadgets} For simplicity and to avoid too much
  notation, we consider only Hamiltonians that can be written as these
  combinations. However our construction will generalize to situations with
  mixed terms (i.e. $XYY$), by introducing a new gadget that connects together
  an $X$- and $YY$-gadget.}\BibitemShut {Stop}%
\bibitem [{\citenamefont {Luks}(1993)}]{luks1993}%
  \BibitemOpen
  \bibfield  {author} {\bibinfo {author} {\bibfnamefont {Eugene}\ \bibnamefont
  {Luks}},\ }\enquote {\bibinfo {title} {Permutation groups and polynomial-time
  computation},}\ \ (\bibinfo  {publisher} {American Mathematical Society},\
  \bibinfo {year} {1993})\ p.\ \bibinfo {pages} {139–175}\BibitemShut
  {NoStop}%
\bibitem [{\citenamefont {Crawford}(1992)}]{crawford1992theoretical}%
  \BibitemOpen
  \bibfield  {author} {\bibinfo {author} {\bibfnamefont {James}\ \bibnamefont
  {Crawford}},\ }\bibfield  {title} {\enquote {\bibinfo {title} {A theoretical
  analysis of reasoning by symmetry in first-order logic},}\ }in\ \href
  {https://www.semanticscholar.org/paper/A-theoretical-analysis-of-reasoning-by-symmetry-in-Crawford/f9c3b3bbfeea89b97a48a9e3e51c724f5f24c41b?p2df}
  {\emph {\bibinfo {booktitle} {AAAI Workshop on Tractable Reasoning}}}\
  (\bibinfo {year} {1992})\ pp.\ \bibinfo {pages} {17--22}\BibitemShut
  {NoStop}%
\bibitem [{Note4()}]{Note4}%
  \BibitemOpen
  \bibinfo {note} {As previously mentioned, for full generality one must
  consider composite gadgets for Hamiltonians with mixed terms (i.e. XYY or
  XYZ). See supplementary material for details.}\BibitemShut {Stop}%
\bibitem [{\citenamefont {Zemlyachenko}\ \emph {et~al.}(1985)\citenamefont
  {Zemlyachenko}, \citenamefont {Korneenko},\ and\ \citenamefont
  {Tyshkevich}}]{zemlyachenko1985graph}%
  \BibitemOpen
  \bibfield  {author} {\bibinfo {author} {\bibfnamefont {Viktor~N}\
  \bibnamefont {Zemlyachenko}}, \bibinfo {author} {\bibfnamefont {Nickolay~M}\
  \bibnamefont {Korneenko}}, \ and\ \bibinfo {author} {\bibfnamefont
  {Regina~I}\ \bibnamefont {Tyshkevich}},\ }\bibfield  {title} {\enquote
  {\bibinfo {title} {Graph isomorphism problem},}\ }\href
  {https://doi.org/10.1007/BF02104746} {\bibfield  {journal} {\bibinfo
  {journal} {J. Math Sci.}\ }\textbf {\bibinfo {volume} {29}},\ \bibinfo
  {pages} {1426--1481} (\bibinfo {year} {1985})}\BibitemShut {NoStop}%
\bibitem [{\citenamefont {Yu}\ \emph {et~al.}(2015)\citenamefont {Yu},
  \citenamefont {Wang},\ and\ \citenamefont {Samworth}}]{yu2015useful}%
  \BibitemOpen
  \bibfield  {author} {\bibinfo {author} {\bibfnamefont {Yi}~\bibnamefont
  {Yu}}, \bibinfo {author} {\bibfnamefont {Tengyao}\ \bibnamefont {Wang}}, \
  and\ \bibinfo {author} {\bibfnamefont {Richard~J}\ \bibnamefont {Samworth}},\
  }\bibfield  {title} {\enquote {\bibinfo {title} {A useful variant of the
  davis--kahan theorem for statisticians},}\ }\href
  {https://doi.org/10.1093/biomet/asv008} {\bibfield  {journal} {\bibinfo
  {journal} {Biometrika}\ }\textbf {\bibinfo {volume} {102}},\ \bibinfo {pages}
  {315--323} (\bibinfo {year} {2015})}\BibitemShut {NoStop}%
\bibitem [{\citenamefont {Bhatia}(2013)}]{Bhatia}%
  \BibitemOpen
  \bibfield  {author} {\bibinfo {author} {\bibfnamefont {Rajendra}\
  \bibnamefont {Bhatia}},\ }\href@noop {} {\emph {\bibinfo {title} {Matrix
  analysis}}},\ Vol.\ \bibinfo {volume} {169}\ (\bibinfo  {publisher} {Springer
  Science \& Business Media},\ \bibinfo {year} {2013})\BibitemShut {NoStop}%
\bibitem [{\citenamefont {Arvind}\ \emph {et~al.}(2012)\citenamefont {Arvind},
  \citenamefont {K{\"o}bler}, \citenamefont {Kuhnert},\ and\ \citenamefont
  {Vasudev}}]{arvind2012approximate}%
  \BibitemOpen
  \bibfield  {author} {\bibinfo {author} {\bibfnamefont {Vikraman}\
  \bibnamefont {Arvind}}, \bibinfo {author} {\bibfnamefont {Johannes}\
  \bibnamefont {K{\"o}bler}}, \bibinfo {author} {\bibfnamefont {Sebastian}\
  \bibnamefont {Kuhnert}}, \ and\ \bibinfo {author} {\bibfnamefont {Yadu}\
  \bibnamefont {Vasudev}},\ }\bibfield  {title} {\enquote {\bibinfo {title}
  {Approximate graph isomorphism},}\ }in\ \href
  {https://doi.org/10.1007/978-3-642-32589-2_12} {\emph {\bibinfo {booktitle}
  {International Symposium on Mathematical Foundations of Computer Science}}}\
  (\bibinfo {organization} {Springer},\ \bibinfo {year} {2012})\ pp.\ \bibinfo
  {pages} {100--111}\BibitemShut {NoStop}%
\bibitem [{Note5()}]{Note5}%
  \BibitemOpen
  \bibinfo {note} {A large spectral gap is not \protect \textit {necessary} for
  successful AQC. It is not too difficult to construct examples where StoqAQC
  succeeds despite a small gap, though they often appear pathological and one
  might expect classical methods to be similarly successful.}\BibitemShut
  {Stop}%
\bibitem [{Note6()}]{Note6}%
  \BibitemOpen
  \bibinfo {note} {As noted in the supplement, this result is similar to, but
  in terms of gap-analysis, stronger than that in \cite
  {farhi2008make}}\BibitemShut {NoStop}%
\bibitem [{\citenamefont {Bringewatt}\ and\ \citenamefont {Jarret}(in
  preparation)}]{bringewattjarretinprep}%
  \BibitemOpen
  \bibfield  {author} {\bibinfo {author} {\bibfnamefont {Jacob}\ \bibnamefont
  {Bringewatt}}\ and\ \bibinfo {author} {\bibfnamefont {Michael}\ \bibnamefont
  {Jarret}},\ }\bibfield  {title} {\enquote {\bibinfo {title} {Classical
  simulation algorithms for $k$-local {H}amiltonians with hidden symmetries.}}\
  }\href@noop {} {\  (\bibinfo {year} {in preparation})}\BibitemShut {NoStop}%
\bibitem [{\citenamefont {BrianDavies}\ \emph {et~al.}(2001)\citenamefont
  {BrianDavies}, \citenamefont {Gladwell}, \citenamefont {Leydold},\ and\
  \citenamefont {Stadler}}]{BrianDavies}%
  \BibitemOpen
  \bibfield  {author} {\bibinfo {author} {\bibfnamefont {E}~\bibnamefont
  {BrianDavies}}, \bibinfo {author} {\bibfnamefont {GrahamM.L.}\ \bibnamefont
  {Gladwell}}, \bibinfo {author} {\bibfnamefont {Josef}\ \bibnamefont
  {Leydold}}, \ and\ \bibinfo {author} {\bibfnamefont {Peter~F}\ \bibnamefont
  {Stadler}},\ }\bibfield  {title} {\enquote {\bibinfo {title} {{Discrete nodal
  domain theorems}},}\ }\href {\doibase
  https://doi.org/10.1016/S0024-3795(01)00313-5} {\bibfield  {journal}
  {\bibinfo  {journal} {Linear Algebra and its Applications}\ }\textbf
  {\bibinfo {volume} {336}},\ \bibinfo {pages} {51--60} (\bibinfo {year}
  {2001})}\BibitemShut {NoStop}%
\bibitem [{\citenamefont {Farhi}\ \emph {et~al.}(2008)\citenamefont {Farhi},
  \citenamefont {Goldstone}, \citenamefont {Gutmann},\ and\ \citenamefont
  {Nagaj}}]{farhi2008make}%
  \BibitemOpen
  \bibfield  {author} {\bibinfo {author} {\bibfnamefont {Edward}\ \bibnamefont
  {Farhi}}, \bibinfo {author} {\bibfnamefont {Jeffrey}\ \bibnamefont
  {Goldstone}}, \bibinfo {author} {\bibfnamefont {Sam}\ \bibnamefont
  {Gutmann}}, \ and\ \bibinfo {author} {\bibfnamefont {Daniel}\ \bibnamefont
  {Nagaj}},\ }\bibfield  {title} {\enquote {\bibinfo {title} {How to make the
  quantum adiabatic algorithm fail},}\ }\href
  {https://doi.org/10.1142/S021974990800358X} {\bibfield  {journal} {\bibinfo
  {journal} {Int. J. Quantum Inf.}\ }\textbf {\bibinfo {volume} {6}},\ \bibinfo
  {pages} {503--516} (\bibinfo {year} {2008})}\BibitemShut {NoStop}%
\end{thebibliography}
\end{document}